\DeclareMathOperator*{\argmax}{arg\,max}
\DeclareMathOperator*{\argmin}{arg\,min}
\newcommand{\ChooseVul}{\textit{CHOOSE-VUL}}
\newcommand{\val}{\textit{val}}
\newcommand{\IS}{IS}
\newcommand{\BS}{BS}
\newtheorem{prop}{Proposition}
\newtheorem{example}{Example}[section]
\newtheorem{theorem}{Theorem}[section]
\title{Learning Effective Strategies for Moving Target Defense with Switching Costs

}
\author{
  Vignesh Viswanathan\thanks{\textit{Equal contribution.}}\\
  \small University of Massachusetts, \small Amherst \\
  \small Amherst, USA \\
  \texttt{\small vviswanathan@umass.edu} \\
   \And
  Megha Bose$^\ast$\\
  \small International Institute of \small Information \\ 
  \small Technology \\
  \small Hyderabad, India \\
  \texttt{\small megha.bose@research.iiit.ac.in} \\
 \And
  Praveen Paruchuri \\
  \small International Institute of \small Information \\ 
  \small Technology \\
  \small Hyderabad, India \\
  \texttt{\small praveen.p@iiit.ac.in} \\
}
\begin{document}
\maketitle

\let\thefootnote\relax\footnotetext{A shorter version of this paper titled \textit{'Moving Target Defense under Uncertainty for Web Applications'} appears as an extended abstract in the \textit{Proceedings of the 21st International Conference on Autonomous Agents and Multiagent Systems (AAMAS 2022), May 9–13,
2022, Online.}}

\begin{abstract}
Moving Target Defense (MTD) has emerged as a key technique in various security applications as it takes away the attacker's ability to perform reconnaissance for exploiting a system's vulnerabilities. However, most of the existing research in the field assumes unrealistic access to information about the attacker's motivations and/or actions when developing MTD strategies. Many of the existing approaches also assume complete knowledge regarding the vulnerabilities of a system and how each of these vulnerabilities can be exploited by an attacker. In this work, we aim to create algorithms that generate effective Moving Target Defense strategies that do not rely on prior knowledge about the attackers. Our work assumes that the only way the defender receives information about its own reward is via interaction with the attacker in a repeated game setting. Depending on the amount of information that can be obtained from the interactions, we devise two different algorithms using multi-armed bandit formulation
to identify efficient strategies. We then evaluate our algorithms using data mined from the National Vulnerability Database to showcase that they match the performance of the state-of-the-art techniques, despite using a lot less amount of information.
\end{abstract}

\keywords{Moving Target Defense \and Adaptive Strategy \and Repeated Games}

\section{Introduction}

With the rise in complexity of cyber systems, it is increasingly becoming difficult to know all the vulnerabilities in a particular system beforehand. The ample time attackers typically have to probe for these vulnerabilities has made the situation quite complex for security analysts. The consequences of failure go beyond financial loss and could result in a breach of privacy or a denial of service, significantly harming the users. {\em Moving Target Defense} has established itself as a potential solution to combat threats observed in web applications \cite{Sailik2016Webappmtd, DARE}, operating systems \cite{SMORE}, cloud-based services \cite{Cloud-MTD2, Cloud-MTD}, etc. Instead of maintaining a single configuration, analysts can maintain multiple configurations of the system and alternate between them. The key idea behind this is to take away the advantage the attacker has, i.e., to perform reconnaissance over time on a static configuration. In recent times, the research community has shown a lot of interest in this approach, with a number of works (see \cite{Cho2020Survey}) studying its viability and suggesting methods to switch between the configurations.

Several works make use of the problem's natural model as a game between the analyst (modeled as a defender) and multiple attackers to develop game theoretic approaches for identifying switching strategies, i.e., which configuration to switch to at what phase. However, a lot of the existing research assumes an unrealistic amount of prior knowledge about the vulnerabilities in the system as well as the competency and motivation of the attacker. An effective switching strategy should be unpredictable to the attacker while assuming a realistic amount of prior knowledge. Moreover, it might not be desirable to conduct too frequent switches as the defender incurs a cost for each switch depending on the current and next configuration. 

\subsection{Our Contribution}\label{subsec:contrib}
A key line of research in this topic takes inspiration from and adapts the {\em Strong Stackelberg Equilibrium} solution concept \cite{Sailik2016Webappmtd, Sengupta2020Bsmg}. 
Indeed, the Strong Stackelberg Equilibrium is provably the best strategy to play in a Bayesian Stackelberg Game \cite{Paruchuri2008Dobss}. 
However, computing the Strong Stackelberg Equilibrium requires complete knowledge about the attacker rewards; an unrealistic assumption in most cases. We, therefore, take a different approach and adapt algorithms from the multi-armed bandit literature. This approach helps us design algorithms that do not require knowledge about the attacker rewards while helping us learn good strategies. 

We propose two scalable algorithms that output switching strategies for Moving Target Defense applications - the choice of usage depends on the amount of information available about the system configurations. 
Our first algorithm, \emph{FPL-MTD} (Section \ref{subsec:mtd}), assumes no prior knowledge about the vulnerabilities in the system and the exploits carried out by the attackers.
For the case where we assume such prior knowledge, we propose \emph{FPL-MaxMin} (Section \ref{subsec:maxmin}), which takes advantage of knowledge about the vulnerabilities in the system and knowledge about the attackers. We evaluate our approaches on several datasets to showcase that our approaches match the performance of the state-of-the-art despite using much less information.

\subsection{Related Work}\label{subsec:relwork}
Our work is closely related to and builds upon the work of \cite{Sailik2016Webappmtd}, who introduce the problem of Moving Target Defense for web applications and propose a switching strategy based on the concept of Strong Stackelberg solution.
A follow-up paper for this work also studies a reinforcement learning based approach to generate effective switching strategies \cite{Sengupta2020Bsmg}. Other works that compute an effective switching strategy use genetic algorithms \cite{Crouse2012ImprovingTD, Crouse2011AMT} to improve the diversity of the deployed implementations and reinforcement learning \cite{Zhu2014RobustRL} to minimize the total damage. \cite{Zhu2014RobustRL} also recognize the problem of information uncertainty and assume the same amount of information about the defender reward function as we do in Section \ref{subsec:mtd}. 
To the best of our knowledge, no prior work uses algorithms inspired by the multi-armed bandit literature to generate switching strategies for Moving Target Defense. We empirically compare our approach with all the other utility maximizing approaches described above in Section \ref{sec:expts}.

Other related works for MTD deal with identifying when the defender should deploy a new implementation \cite{Cai2016MovingTD, Zhu2013GameTheoreticAT, MTDPredictability} and how the defender can create multiple implementations to switch between \cite{10.1145/2342441.2342467, AlShaer2012RandomHM, 10.1007/978-3-642-30436-1_32, MigrationMTD, 6900086}. Another line of work studies how Moving Target Defense can be applied in real-world settings \cite{9169999, 9124015}. All these areas of research are orthogonal and complementary to our work.

\vspace{-0.5em}
\section{Preliminaries}\label{sec:prelims}
For any positive integer $t$, we use $[t]$ to denote the set $\{1, 2, \dots, t\}$. 
We present a table consisting of all the notation introduced in this section in Appendix \ref{sec:notation-table}.

\textbf{Game Model: } We model the setting of Moving Target Defense as a {\em repeated bayesian game} played by two different players:
\begin{inparaenum}[(a)]
\item a {\em defender} (denoted by $\theta$) who deploys the system and seeks to protect it, and
\item an {\em attacker} (denoted by $\Psi$) who seeks to exploit the vulnerabilities.
\end{inparaenum}
There usually is only one defender in these games i.e., the organization that deploys the system (which can be a team of analysts). 
However, there can be multiple attackers comprised of individuals or groups of individuals, each with different aims and motivations, trying to exploit the deployed system. As commonly done with Bayesian games \cite{Paruchuri2008Dobss}, we represent these multiple attackers as different attacker types of a single attacker.
More formally, we have a game with $2$ players comprising of one defender and one attacker. The attacker has $\tau$ types; 
we denote these types by a set $\Psi = \{\psi_1, \psi_2, \dots, \psi_{\tau}\}$. We further assume that there exists a probability distribution $ P$ across the set of attackers. This distribution may not always be known to the defender. 

The defender has a set $ C = \{c_1, c_2, \dots c_n\}$ of $n$ system {\em configurations} it can deploy. Various aspects of a system can be changed to construct the different system configurations.
Consider an example of a web application whose database has two different implementations, one using MongoDB and the other using MySQL. If every other part of the web application has only one implementation, such an application will have two different configurations; one with a MongoDB-based database and the other with a MySQL-based database. In addition, assume that the application has been implemented using multiple programming languages -- say one version uses Python and the other uses Java. Depending on compatibility, the defender can now have up to four deployable configurations to switch between -- each corresponding to one possible pair of database implementation and programming language. This example can be extended to a case where every part of the technological stack has multiple implementations, each combination of implementations resulting in a new deployable configuration.
In the case of devices consisting of multiple adjustable architectural layers \cite{device-mtd}, different system configurations can be generated by similar reconfiguration at each layer. 
In a network-based system, as described in \cite{botnet-mtd}, whose aim is to select a k-subset of the nodes to place attack detectors on, different system configurations refer to each potential k-tuple of nodes.

Much like its literal definition, {\em vulnerabilities} refer to the different aspects in the configuration that can be exploited by attackers to violate the integrity of the system. They usually depend on the system's hardware and software features.
Each configuration $c \in  C$ has a set of vulnerabilities $ V_c$.
We define the set of all vulnerabilities by  $ V = \bigcup_{c \in C}  V_c$. 
The vulnerability sets of each configuration need not be disjoint.
They may not always be known beforehand; however, we assume that no configuration is perfect i.e. every configuration has some vulnerability.
We define an {\em exploit} as a method (or an algorithm) that can be used to take advantage of a vulnerability at the cost of the defender. Since every vulnerability has an associated exploit, we use the two terms interchangeably.

\textbf{Rewards: } For each configuration $c \in  C$, for each attacker type $\psi \in \Psi$, for each vulnerability $v \in  V$, we define a reward to the defender $\theta$ at the $t$'th round denoted by $r^t_{\theta}(\psi, v, c)$.
If the attacker type $\psi$ successfully exploits a vulnerability in the deployed configuration, the reward $r^t_{\theta}(\psi, v, c) \in [-1, 0)$; the specific value depends on the amount of damage done by the exploit. 
If the attacker type $\psi$ is unsuccessful in exploiting a vulnerability in the deployed configuration, the reward obtained by the defender is $0$. 
Similarly, we define rewards for each attacker type $\psi \in \Psi$ denoted by $r^{t}_{\psi}(v, c)$. 
$r^{t}_{\psi}(v, c) \in (0, 1]$ if the attacker type $\psi$ successfully exploits a vulnerability in the deployed configuration; $r^{t}_{\psi}(v, c) = 0$ otherwise. Since most of this paper deals with defender rewards, we drop the subscript $\theta$ when referring to defender rewards, denoting them solely by $r^t(.)$.

The different attacker types may result in the defender obtaining different utilities because of the different possible attacker capabilities e.g., some types may not have the expertise to carry out certain attacks or may not be able to inflict as much damage as others. 
It is important to note that the rewards need not be constant throughout; they can be stochastic or even adversarial in nature.

There are $T$ rounds in the repeated game. Each round represents an attempted exploit by an attacker type. It can also represent a fixed time interval after which the defender changes its configuration. We define a mapping $f:[T] \mapsto [\tau]$ that utilizes probability distribution $ P$ over the attacker set to map each round to the attacker type that attacks at that round. 
At the $t$'th round, the defender chooses a configuration to deploy (denoted by $d_t \in  C$) and the attacker type $\psi_{f(t)}$ chooses a vulnerability to exploit (denoted by $a_t^{f(t)}$).
At the end of the round, the defender receives the reward obtained in that round $r_{\theta}^t(\psi_{f(t)}, a_t^{f(t)}, d_t)$ along with some other problem specific information. 

If the defender switches its deployed configuration i.e. $d_t \ne d_{t-1}$, it incurs an additional fixed {\em switching cost}. This cost is given by the mapping $s: C \times  C \mapsto [0,1]$; the cost incurred by switching from configuration $c$ to configuration $c'$ is given by $s(c, c') \in [0,1]$. Since such costs can be determined during the testing of the system, we assume these costs are known a priori. When it comes to web applications, switching costs can model how many technological stack components are different between two configurations, as switching between them would require replacement in those components. For devices with multiple architectural layers \cite{device-mtd}, switching costs can model how expensive (in terms of latency and/or power consumption) changing one architectural layer is as compared to others, resulting in a different cost when switching between configurations that vary in one architectural layer from those that vary in another. This can also be extended easily to model changes in more than one layer. For the network example described earlier \cite{botnet-mtd}, switching cost may depend on how many detectors need to be relocated from one k-tuple to another and/or features of the nodes under consideration.

We make no assumptions about the attacker strategy since the attacker may not be fully rational. We also do not assume knowledge of the attacker utility, since these values will be hard to observe and even harder to find out beforehand. On the other hand, we make the attackers more powerful by assuming that they have access to knowledge about the configurations deployed in the previous rounds and can use it to decide which exploit to attempt. In the bandit literature, this is commonly referred to as an {\em adaptive adversary}.


\textbf{Strategies:} We define a {\em defender strategy profile} by a tuple of length $T$, $(d_1, d_2, \dots, d_T)$ where $d_t$ corresponds to the configuration deployed at timestep $t$. 
We allow $d_t$ to depend on the rewards and the other information observed in all the previous rounds. 

Similarly, we define an {\em attacker strategy profile} by a tuple of length $T$, $(a^{f(1)}_1, a^{f(2)}_2, \dots, a^{f(T)}_T)$ where $a_t^{f(t)}$ corresponds to the exploit attempted at timestep $t$ by attacker type $\psi_{f(t)}$. 
However, unlike the defender, the attacker types may know their rewards a priori. The only additional information each attacker type observes at every round is the configuration used by the defender at the previous rounds. 
It sometimes helps to define $a_t^{f(t)}$ as a function $a_t^{f(t)}(d_1, d_2, \dots, d_{t-1})$ that depends on all the previous defender strategies. 
However, when it is clear from context, we omit the parameters of the function representing the exploit solely by $a_t^{f(t)}$. We also sometimes omit $f(t)$ representing the exploit at round $t$ solely by $a_t$.


\textbf{Objective:} We consider settings where the defender does not have complete a priori information and has to learn from observations at the end of each round of the repeated game. 
In such settings, we aim to create algorithms that generate strategy profiles that minimize the total cost incurred by the defender. 
This cost can be modeled as a simple sum of the rewards and the switching cost at every round of the repeated game. 
More formally, given a strategy $ D = (d_1, d_2, \dots, d_T)$, we define the {\em total utility} of $ D$ as follows
\vspace{-0.5em}
\begin{align*}
    TU( D) = \sum_{t=1}^T (r^t(\Psi_{f(t)}, a_t^{f(t)}, d_t) - s(d_{t-1}, d_t))
\end{align*}
where $r^t$ represents the reward at round $t$ and $s$ represents the switching cost.
Our goal in this paper is to develop algorithms that output strategy profiles that maximize the above expression. 
However, maximizing such a function would not be possible when we do not know the function $a_t^{f(t)}(d_1, \dots, d_{t-1})$ beforehand for each possible $f(t)$. 
Instead, we take an empirical approach and develop algorithms that have a good empirical performance on realistic problem instances.

\section{Multi-armed Bandits and Regret}\label{sec:lower-bounds}

When the rewards at each round are chosen adversarially, it is easy to see that the Moving Target Defense can be modeled as the adversarial multi-armed bandit problem. Each configuration can be seen as an arm of the multi-armed bandit instance, and just like the multi-armed bandit problem, we observe rewards after we deploy a configuration (play an arm). 
This section discusses theoretical results from the multi-armed bandit literature that are applicable to our problem.
For a more detailed summary about bandit algorithms and their theoretical guarantees, we refer our readers to \cite{Slivkins2019Bandits}.

The ability to cast our problem as a multi-armed bandit problem makes it possible to use efficient algorithms like \emph{FPL+GR} \cite{Neuandbartok2013Fpl} or Exp3 \cite{Auer2003Exp3} to obtain strategies with good {\em external regret} guarantees. 
The external regret of a strategy profile can be defined as the difference between the total utility of the strategy profile and the utility of the best fixed strategy in hindsight; a fixed strategy in our context refers to the permanent deployment of a single configuration. 
Both the \emph{FPL+GR} and the Exp3 algorithms are guaranteed to output strategies with an expected external regret of $ O(\sqrt{T})$; the total utilities of such strategies converge to the total utilities of the best fixed strategy in hindsight when $T$ is large.
Indeed, these results extend to games as well. 
For zero sum games, it is well known that average rewards from bandit strategies converge to the value of the game \cite[Chapter 9]{Slivkins2019Bandits}.
\cite{Long2016Fplue} show that a slightly modified \emph{FPL+GR} algorithm minimizes external regret in repeated security games.
These results show that bandit algorithms can generate highly efficient strategies.

However, when switching costs are included in the problem, bandit algorithms lose their theoretical guarantees. The main issue arises from the fact that the algorithms themselves do not consider switching cost; so to use these algorithms for our setting, we must include the switching cost in the reward function. This modification gives us the following regret guarantee:
\begin{align*}
    \max_{d \in  C} & \bigg (\sum_{t=1}^T (r^t(\Psi_{f(t)}, a_t^{f(t)}(d_1, d_2, \dots, d_{t-1}), d) - s(d_{t-1}, d)) \bigg ) \\
    & - \bigg ( \sum_{t=1}^T (r^t(\Psi_{f(t)}, a_t^{f(t)}, d_t) - s(d_{t-1}, d_t)) \bigg ) 
    \le  O(\sqrt{T})
\end{align*}
As one can see from the above expression and as argued in \cite{Dekel2012PolicyRegret}, this notion of external regret does not have any meaning when there are switching costs. For problem instances with switching costs, \cite{Dekel2012PolicyRegret} propose an alternate form of regret called the {\em policy regret}. This regret takes into consideration how the attacker would have behaved if the defender had played a pure strategy and then compares the total utility of the bandit algorithm to the total utility of the best pure strategy in hindsight. The policy regret of a strategy profile is given by the following expression:
\begin{align*}
    \max_{d \in  C} &\bigg (\sum_{t=1}^T r^t(\Psi_{f(t)}, a_t^{f(t)}(d, d, \dots, d), d) \bigg )\\
    &- \bigg ( \sum_{t=1}^T (r^t(\Psi_{f(t)}, a_t^{f(t)}(d_1, d_2, \dots, d_{t-1}), d_t) - s(d_{t-1}, d_t)) \bigg ) 
\end{align*}

\cite{Dekel2012PolicyRegret} show that no algorithm can guarantee a sublinear policy regret against an adversary who can make use of information from prior rounds. In other words, all algorithms will have a policy regret of $\Omega (T)$ in the presence of an adaptive adversary. Since the highest possible value the policy regret can have is $2T$, every algorithm can guarantee a tight regret bound of $ O(T)$. Hence, it is impossible to differentiate algorithms by the strength of their regret guarantee. 

Despite the lack of meaningful theoretical guarantees, due to their performance (both theoretical and empirical) in other settings with repeated games \cite{Long2016Fplue}, we believe popular bandit approaches will still perform well in practice in our problem setting. We, therefore, use these algorithms as inspiration to devise strategies for Moving Target Defense.

\vspace{-0.4em}
\section{Heuristics for Utility Maximization}\label{sec:heuristics}
In this section, we propose two algorithms inspired by algorithms from the multi-armed bandit literature. Both our algorithms do not require information about attacker rewards. We ensure this for two reasons: 

(a) In many cases, the defender has very little prior knowledge about the attacker types and their reward function. An example of such a case would be software that provides novel services that may not have been provided before. Such an application usually has very little data about its vulnerabilities which makes it impossible to infer anything about the different kinds of attackers that may be looking to exploit the application. In addition to this, dynamic applications may not have constant vulnerability sets with new vulnerabilities emerging as the application evolves; the defender can make no assumptions about the vulnerabilities of such an application.

(b) Second, even in the case when the vulnerabilities of each configuration and attacker rewards are known beforehand, the actions of an irrational attacker may not reflect these rewards in any meaningful way. For example, an irrational attacker may not follow the Strong Stackelberg Equilibrium strategy at every round. In such a case, it is better to not assume any strategy on the attacker's part and learn the attacker's strategy through observations.

\vspace{-0.4em}
\subsection{FPL-MTD}\label{subsec:mtd}

Our first algorithm is built for the applications described in point (a) above. 
The algorithm assumes no prior knowledge about the vulnerabilities in the deployable configurations or the different attackers looking to exploit the system.
Algorithm \ref{algo:fpl-mtd} presents steps of \emph{FPL-MTD}.

\emph{FPL-MTD} maintains a reward estimate for each configuration which loosely corresponds to the expected reward that the defender will obtain if they deploy that configuration. 
At each turn, the algorithm either chooses a random arm to explore with some probability or chooses the arm with the highest reward estimate.

\begin{algorithm}
    \caption{FPL-MTD}
    \label{algo:fpl-mtd}
    \begin{algorithmic}[1]
        \State \textbf{Input:} hyperparameters $\eta$ and $\gamma$
        \State $\hat{r}_{c}^1 \gets 0 \quad \forall c \in  C$
        \For{$t$ \textbf{in} $1$ \textbf{to} $T$}
            \State Sample $q \sim \texttt{Bernoulli}(\gamma)$ \label{line:fpl-start}
            \If{$q = 1$}
                \State Let $d_t$ be a uniformly sampled configuration
            \Else 
                \State Sample $z_{c} \sim \texttt{exp}(\eta)$  $\quad \forall c \in  C$
                \State $u_c \gets \hat{r}_{c}^t - z_c \quad \forall c \in C$
                \State $d_t \gets \max_{c \in C}  ({u_c} - s(d_{t-1}, c))$\; \label{line:mtd-selection}
            \EndIf \label{line:fpl-end}
            \State Adversary of unknown type $\psi_{f(t)}$ plays unknown action $a_t$, giving the defender a reward $r^t(\psi_{f(t)}, a_t, d_t)$
            \State Run GR-MTD to obtain $K(d_t)$
            \State $\hat{r}_d^{t+1} \gets \frac{1}{t}{\big [(t-1)\hat{r}_d^t + K(d_t)r^t(\psi_{f(t)}, a_t, d_t) \mathbb{I}\{d = d_t\}\big ]}$\;
        \EndFor
    \end{algorithmic}
\end{algorithm}


\textbf{Configuration Selection: }Inspired by the \emph{FPL+GR} algorithm, our algorithm follows a classic Follow the Perturbed Leader formulation \cite{Hannan1958Fpl, Kalai2005Fpl} to choose configurations. 
The \emph{FPL+GR} algorithm maintains an estimated reward $\hat r_c^t$ for each configuration $c$ and time step $t$. At each timestep $t$, the algorithm perturbs these rewards and chooses the configuration with the maximum perturbed reward.
The key difference between our approach and \emph{FPL+GR} is our consideration of switching costs. 
While \emph{FPL+GR} simply chooses the configuration which maximizes the expected reward, we choose the configuration which maximizes the difference between the estimated average reward and the switching costs that the defender will incur (Line \ref{line:mtd-selection}). 


\textbf{Reward Estimation:} A natural defender reward estimate $\hat r_c^t$ to use for each configuration $c$ is the average reward obtained from deploying $c$ in all the previous time steps. 
More formally, at a time step $t$. we would like to estimate the reward $\hat r_c^t$ as 
\begin{align}
    \hat r_c^t = \frac1{t-1} {\sum_{i \in [t-1]} r^i(\psi_{f(i)}, a_i, c)} \label{eq:mtd-desired-estimate}
\end{align}
where $r^i$ denotes the reward function at time step $i$. Since we switch between configurations, we do not observe the rewards for configurations which are not deployed. Therefore, we cannot use this value exactly. 
However, by randomizing over the set of actions, we can have our reward estimate be an unbiased estimate of the value in \eqref{eq:mtd-desired-estimate}. More specifically, the reward estimate  

\begin{align}
    \hat r_c^t = \frac{1}{t-1} \sum_{i \in [t-1]} \frac{r^i(\psi_{f(i)}, a_i, c)}{\Pr[d_i = c]} \mathbb{I}\{d_i = c\} \label{eq:mtd-unbiased-estimate}
\end{align}

is an unbiased estimate of the desired reward in \eqref{eq:mtd-desired-estimate} where $\Pr[d_i = c]$ represents the probability that $c$ is chosen as the strategy $d_i$ and $\mathbb{I}\{d_i = c\}$ is an indicator function which takes value $1$ if $d_i = c$ and $0$ otherwise.

However, in order to have a good estimate for every configuration, we need to induce randomness in the algorithm which allows different configurations to be chosen at any given time step; in other words, we need to explore. We do this in two different ways. 
First, at every time step, we perturb the reward estimates using an exponential distribution with mean $\eta$. Second, at every time step, we choose a configuration at random with probability $\gamma$. The extent of the induced randomness can be controlled using the hyperparameters $\eta$ and $\gamma$. 

Furthermore, the closed form of the probability distribution $\Pr[d_t]$  cannot be computed efficiently due to the complexity of the algorithm. 
So we replace the value $\frac{1}{\Pr[d_t = c]}$ with an unbiased estimate of it. 
This can be computed using the well-studied method \cite{Neuandbartok2013Fpl} named Geometric Resampling (described in Algorithm \ref{algo:gr-mtd}). 
The key intuition behind this approach is that, by simulating the FPL step of Algorithm \ref{algo:fpl-mtd} again and again till we observe the same configuration, we can simulate a geometric distribution with expected value $\frac{1}{\Pr[d_t = c]}$. Hence, the number of iterations taken by the simulation to terminate is an unbiased estimate of $\frac{1}{\Pr[d_t = c]}$. The detailed description of this subroutine is given in Algorithm \ref{algo:gr-mtd}.

\begin{algorithm}
    \caption{GR-MTD}
    \label{algo:gr-mtd}
    \begin{algorithmic}
        \State $K(d_t) \gets M$
        \For{$k$ \textbf{in} $1$ \textbf{to} $M$}
            \State Follow lines \ref{line:fpl-start} to \ref{line:fpl-end} in Algorithm \ref{algo:fpl-mtd} to produce $\tilde{d}$ as a simulation of $d_t$
            \If{$\tilde{d} = d_t$}
                
                \State $K(d_t) \gets \min(K(d_t), k)$
            \EndIf
        \EndFor
        \State \textbf{return} $K(d_t)$ 
        \end{algorithmic}
\end{algorithm}
In theory, Geometric Resampling can take an infinite time but in practice, we can only run it a finite $M$ times. However, since our algorithm has a uniform exploration component, the probability of choosing any configuration is lower bounded by $\frac{\gamma}{| C|}$.  
Therefore, with $M \ge \frac{| C|T}{\gamma}$, the Geometric Resampling subroutine will terminate with an exponentially high probability of at least $1 - e^{-T}$. 
(The formal statement and proof have been relegated to the appendix.)

\subsection{FPL-MaxMin}\label{subsec:maxmin}
While Algorithm \ref{algo:fpl-mtd} assumes no prior information about the vulnerabilities of the deployable configurations and the attacker types, it is likely that there exist applications that have access to prior information about the vulnerabilities of the system configurations. 
We propose a second algorithm \emph{FPL-MaxMin} for possibly older and well-studied applications where the set of vulnerabilities in the system are known beforehand, along with the number of attackers and their distribution $ P$. The algorithm assumes that the defender observes, along with the rewards, which attacker attacked at a given timestep and which vulnerability the attacker tried to exploit. The only information the algorithm does not observe is the attacker rewards at every round. 

We present steps of \emph{FPL-MaxMin} in Algorithm \ref{algo:FPL-MaxMin}. Similar to \emph{FPL-MTD} we maintain estimates and choose a configuration based on our estimate. However, the estimates we maintain in \emph{FPL-MaxMin} are for each vulnerability - attacker type pair and then we choose configurations using a max-min strategy. This is mainly done to ensure robustness; by playing the max-min strategy, the algorithm guards itself against the worst possible attacker strategy and can, therefore, guarantee a reasonable reward irrespective of the vulnerability exploited by the attacker.

\begin{algorithm}
    \caption{FPL-MaxMin}
    \label{algo:FPL-MaxMin}
    \begin{algorithmic}
        \State \textbf{Input:} hyperparameters $\eta$ and $\gamma$
        \State $\hat{r}_{v,\psi}^1 \gets 0 \quad \forall v \in  V, \psi \in \Psi$
        \For{$t$ \textbf{in} $1$ \textbf{to} $T$}
            \State Sample $q \sim \texttt{Bernoulli}(\gamma)$
            \If{$q = 1$}
                
                \State Let $d_t$ be a uniformly sampled configuration
            \Else 
                \State Sample $z_{v, \psi} \sim \texttt{exp}(\eta)$ for $v \in  V$, $\psi \in \Psi$
                \State $u_c \gets \sum_{\psi \in \Psi} \min_{v \in  V_c}  P_{\psi} (\hat{r}^t_{v, \psi} - z_{v, \psi}) \quad \forall c$
                \State $d_t \gets \max_{c \in C}  (u_c - s_{d_{t-1}, c})$\;
            \EndIf
            \State Adversary of type $\psi_{f(t)}$ plays $a_t$, giving the defender a reward $r^t(\psi_{f(t)},a_t, d_t)$
            \State $p_{v, \psi_{f(t)}} \gets  P_{\psi_{f(t)}} \frac{|i \in [t] : f(i) = \psi_{f(t)} \land a_i = v|}{|i \in [t] : f(i) = \psi_{f(t)}|} \quad \forall v \in  V$
            \State $n_{v} \gets |\{i : v \in  V_{d_i}\}| \quad \forall v \in  V$
            \State $\hat{r}^{t+1}_{v,\psi_{f(t)}} \gets \frac{1}{n_v} \sum_{i \in [t]} \frac{r^i(\psi, v, d_i)}{p_{v. \psi_{f(t)}}} \mathbb{I}\{a_i, \psi_{f(i)} = v, \psi_{f(t)}\} $\;
        \EndFor
    \end{algorithmic}
\end{algorithm}

\textbf{Reward Estimation:} Similar to the previous algorithm, we define a reward estimate for each vulnerability, attacker type pair $(v, \psi)$:

\begin{align}
    \hat r_{v, \psi}^t =& \frac1{|\{i \in [t-1] : v \in  V_{d_i}\}|} {\sum_{i \in [t-1]} r^i(\psi, v, d_i)} \label{eq:maxmin-reward-estimate}
\end{align}

This corresponds to the average reward the defender would have obtained if attacker type $\psi$ exploited vulnerability $v$ at every round when the defender deployed a configuration with vulnerability $v$. Note that $r^i(\psi, v, d_i) = 0$ when $v$ is not a vulnerability of $d_i$.
This approach is compelling since it takes advantage of the fact that multiple configurations might have the same vulnerability, so we can combine the information from attacks on different configurations to estimate the reward of that particular vulnerability.

Since this value cannot be computed accurately, an unbiased estimate for this value can be computed as 
\begin{align}
    \hat r_{v, \psi}^t = &\frac1{|\{i \in [t-1] : v \in  V_{d_i}\}|} \notag \\
    &\quad \sum_{i \in [t-1]} \frac{r^i(\psi, v, d_i)}{\Pr[a_i, \psi_{f(i)}= v, \psi]} \mathbb{I}\{a_i, \psi_{f(i)} = v, \psi\} \label{eq:maxmin-unbiased-estimate}
\end{align}
where $\Pr[a_t, \psi_{f(t)} = v, \psi]$ represents the probability that attacker of type $\psi$ tries to exploit vulnerability $v$ at time step $t$. and $\mathbb{I}\{a_i, \psi_{f(i)} = v, \psi\}$ is an indicator function which takes the value $1$ if attacker type $\psi$ exploits vulnerability $v$ at time step $i$ and $0$ otherwise. 
This is the reward estimate we use. 

Similar to the previous section, $\Pr[a_t, \psi_{f(t)} = v, \psi]$ cannot be computed exactly. However, the reasons for this are different. 
While \eqref{eq:mtd-unbiased-estimate} relies on a probability dependent on the randomization of the algorithm, \eqref{eq:maxmin-unbiased-estimate} relies on the randomization used by each attacker type and the randomization over the set of attacker types. 

Due to this difference, we cannot use Geometric Resampling since we cannot sample exploits from the attackers to infer their mixed strategy. 
Instead, we approximate the mixed strategy of the attacker type by the empirical mixed strategy the attacker type has used so far. 
More formally, we approximate $\Pr[a_t, \psi_{f(t)} = v, \psi]$ as
\begin{align}
    \Pr[a_t, \psi_{f(t)} = v, \psi] \approx  P_{\psi} \frac{|i \in [t] : \psi_{f(i)} = \psi \land a_i = v|}{|i \in [t] : \psi_{f(i)} = \psi|} \label{eq:gr-approx}
\end{align}
With information about $a_t$ and $\psi_{f(t)}$, we also update our estimate of $\Pr[a_i, \psi_{f(i)} = v, \psi]$ for all rounds $i < t$ using the above expression. 
Using these updated probability estimates, we recompute the reward estimates at the end of each round using \eqref{eq:maxmin-unbiased-estimate}.
Note that when an attacker type does not randomize and plays a pure strategy, the reward estimates computed using \eqref{eq:gr-approx} will reflect this pure strategy i.e. the reward estimates will be non-zero only for the pure strategy of the attacker type. The other rewards do not matter since the attacker type will not play any other strategy.

\textbf{Configuration Selection:} When we only have one attacker type (say $\psi$), arguably the most robust approach to choose a configuration using reward estimates for each vulnerability is the max-min strategy --- estimate the value of every configuration as the reward of the least rewarding vulnerability of that configuration. 
Then, choose the configuration with the highest value subject to switching costs. More formally, this evaluates to

\begin{align}
d_t = \argmax_{c \in  C} \min_{v \in  V_c} \big{(} \hat{r}^{t}_{v, \psi} - s(d_{t-1}, c) \big{)}
\end{align}

When there are multiple attacker types and the probability distribution across these types is known, we can make use of this information by estimating the value of a configuration as the weighted average of the least rewarding vulnerabilities of the configuration for each type. We weigh each attacker's estimated reward by the probability of the attacker attacking at the given round. This gives us the following configuration selection strategy:
\begin{align}
     d_t = \argmax_{c \in  C} \bigg ( \bigg [\sum_{\psi \in \Psi}  P_{\psi} \min_{v \in  V_c} \hat r^t_{v, \psi} \bigg ] - s(d_{t-1}, c)\bigg ) \label{eq:max-min}
\end{align}
This is the main configuration selection method we use. We also retain the randomization of the algorithm used in Algorithm \ref{algo:FPL-MaxMin} to create further uncertainty about the defender strategy. 

Lastly, we note that the assumption on prior knowledge of the number of attackers can be relaxed.
In the event the defender does not know about the number of attackers or does not observe which attacker attacked at any given round, we can still use Algorithm \ref{algo:FPL-MaxMin} assuming only one attacker.

\section{Experimental Results}\label{sec:expts}

In this section, we compare our algorithm with existing approaches on datasets constructed using information from the National Vulnerability Database (NVD). The NVD consists of data about vulnerabilities present in the different hardware and software components used in computer systems across the world. This data can be used to construct synthetic problem instances to evaluate our algorithms. This evaluation approach has also been used by other works which study Moving Target Defense for web applications \cite{Sailik2016Webappmtd, Sengupta2020Bsmg}. More specifically, we compare our approach with the following \textbf{defender strategies}:

\textbf{BSS-Q} is a Reinforcement Learning (RL) based approach proposed by \cite{Sengupta2020Bsmg}, that learns a mixed strategy over time but assumes knowledge about the attacker reward obtained at each time step. Since we assume we do not know the attacker reward at each time step, we train this algorithm for $50$ episodes of length $10$ and use the output strategy for evaluation. We train for a small number of episodes since this training process is very slow for large problem instances; each episode involves solving $10$ MIQPs. The training process gives the algorithm complete access to attacker and defender rewards and assumes that the attacker plays the $\epsilon$-greedy strategy, given the mixed strategy of defender.

\textbf{S-OPT} is Stackelberg solution based approach proposed by \cite{Sailik2016Webappmtd}. This algorithm assumes complete knowledge about the attacker and defender rewards and computes a mixed strategy that is used at all the timesteps. 

\textbf{RobustRL} is an RL based approach proposed by \cite{Zhu2014RobustRL}, which is similar to our approach, does not need any prior information about the attackers nor about the vulnerabilities in each configuration.

\textbf{S-Exp3} is a state-of-the-art bandit algorithm that models
switching costs against a non-adaptive adversary \cite{Dekel2012PolicyRegret}, while assuming no prior information about attackers or vulnerabilities. We use the method proposed by \cite{Dekel2012PolicyRegret}, to modify the Exp3 algorithm \cite{Auer2003Exp3} to minimize the policy regret in the presence of switching costs against a non-adaptive adversary.

\textbf{FPL+GR} is a multi-armed bandit algorithm used for importance weighting using only access to samples. It assumes that there is no prior information about the attackers or vulnerabilities and learns from the defender rewards observed \cite{Neuandbartok2013Fpl}.

\textbf{BiasedASLR} is a randomized
algorithm inspired by the Address Space Layout Randomization (ASLR) technique \cite{Pax2003aslr}. While ASLR based techniques typically maximize the entropy of the address selection, which is equivalent to configuration selection in our case
, we additionally bias the random distribution towards configurations whose vulnerabilities have been exploited fewer times. More formally, at round $t$, \emph{BiasedASLR} chooses a random configuration inversely proportional to the number of times a vulnerability of this configuration has been exploited in the first $t-1$ rounds.

\emph{BSS-Q} and \emph{S-OPT} are state-of-the-art algorithms in the case of having full information about the rewards and vulnerabilities a priori.
However, both make unrealistic assumptions about the knowledge of attacker rewards and computing power (both require solving a Mixed Integer Quadratic Program). Moreover, neither are optimal: \emph{BSS-Q} requires an unrealistic amount of training to converge to the optimal solution and \emph{S-OPT}, as presented in \cite{Sengupta2020Bsmg}, is sub-optimal since it models the problem as a single stage normal form Bayesian Stackelberg Game. This modelling does not optimally capture the sequential nature of the game. 

To the best of our knowledge, \emph{RobustRL} is the state-of-the-art technique in the minimal information case. \emph{RobustRL}, \emph{S-Exp3} and \emph{FPL+GR} assume as much access to information as the \emph{FPL-MTD} algorithm. However, none of them take switching costs into account; therefore, at each timestep we give them a modified reward which takes into account the switching cost of the previous iteration: $r^t(\Psi_{f(t)}, a_t, d_t) - s(d_{t-1}, d_t)$. \emph{BiasedASLR} is our only baseline which does not use rewards at all, only counting the number of times attackers have attempted to exploit the vulnerabilities of a specific configuration. We evaluate each of the above algorithms for the following \textbf{attacker strategies}:


\textbf{Best Response:} At each time step, the attacker chooses the best vulnerability to exploit against the empirical mixed strategy of the defender derived from the configurations observed in all the previous timesteps.

\textbf{FPL-UE:} At each time step, the attacker chooses a strategy according to the FPL-UE algorithm \cite{Long2016Fplue}.

\textbf{Stackelberg:} At each time step, the attacker plays the optimal Stackelberg solution computed using the DOBSS algorithm \cite{Paruchuri2008Dobss}.

\textbf{Random:} At each time step, the attacker chooses a vulnerability to exploit uniformly at random from the list of vulnerabilities that the attacker can exploit.


\textbf{Quantal Response (QR):} At each time step, the attacker chooses a vulnerability to exploit using the QR model \cite{McKelvey1995QR}. In the security games literature, the QR model has been used to model human (and bounded-rational) adversaries \cite{Yang2013QR, Nguyen2014Suqr}.

\textbf{Biased Stochastic:} At time step $t$, the attacker chooses a vulnerability to exploit at random with a probability proportional to the number of times the exploit would have been successful in the first $t-1$ rounds. This is the attacker analogue of the BiasedASLR strategy.

We evaluate these defender strategies using the attacker strategies mentioned above on randomly generated datasets whose reward distributions follow the (loosely speaking) reward distribution of the vulnerabilities present in the National Vulnerability Database. By evaluating against a specific attacker strategy (say Best Response), we mean, we evaluate our algorithms in an instance where all the attacker types follow the Best Response strategy. We also study how our algorithms can be used to decide which vulnerabilities to fix in Appendix \ref{sec:identification}. We also evaluate our strategies on randomly generated problem instances and the problem instance used by \cite{Sailik2016Webappmtd} in Appendix \ref{sec:additional-expts}.


We evaluate these defender strategies using the attacker strategies mentioned above on the following datasets:
\begin{inparaenum}[(a)]
\item A toy dataset created by \cite{Sailik2016Webappmtd}, that uses vulnerabilities from the National Vulnerability Database (NVD) to generate rewards (Section \ref{subsec:expts-nvd-small}) and
\item Larger randomly generated datasets whose reward distributions roughly follow the Common Vulnerability Scoring System (CVSS) based reward distribution of the vulnerabilities present in the National Vulnerability Database (Section \ref{subsec:expts:nvd-large}).
\end{inparaenum}

By evaluating using a specific attacker strategy (say Best Response), we evaluate our algorithms in an instance where all the attacker types follow the Best Response strategy. Since defender rewards are negative, for ease of exposition, we plot the difference between the total utility of the algorithm in question and the total utility of the uniform random strategy. We refer to this value as the {\em performance} of an algorithm; it can intuitively be understood as {\em how much better} the algorithm is compared to a strategy that chooses which configuration to deploy uniformly at random. Note that the uniform random strategy is an entropy maximizing strategy and therefore (also) serves as a baseline ASLR technique. 



\vspace{-0.4em}
\subsection{Hyperparameter Selection}\label{subsec:expts-hyperparameter}
In order to select the hyper-parameters $\eta$ and $\gamma$, we generate $5$ random zero sum game instances to test the different hyperparameter combinations; the specific details about dataset generation are omitted due to space constraints. We consider all combinations of $\gamma \in \{0.001, 0.002, 0.003, \dots, 0.02\}$ and $\eta \in \{0.01, 0.02, \dots, 0.1\}$.

For each hyperparameter combination, we run our algorithms on all $5$ zero sum datasets $5$ times for $500$ rounds and compute the average total utility. From this selection method, we identified the ideal combination of parameters for \emph{FPL-MTD} to be $\gamma_{\text{MTD}} = 0.007$ and $\eta_{\text{MTD}} = 0.1$. Similarly, for \emph{FPL-MaxMin} we identified $\gamma_{\text{MaxMin}} = 0.006$ and $\eta_{\text{MaxMin}} = 0.03$. 

\begin{table*}
\centering
\scalebox{0.79}{\begin{tabular}{|c|c c c c c|}
    \hline
    \textit{Algorithm} & \textit{Vulnerabilities} & \textit{Attacker Types} & \textit{Defender Rewards (a priori)} & \textit{Defender Rewards (via observation)} & \textit{Attacker Rewards}  \\
    \hline
    BSS-Q & \checkmark & \checkmark & \checkmark &  & \checkmark \\
    \hline
    S-OPT & \checkmark & \checkmark & \checkmark &  & \checkmark \\
    \hline
    \textbf{FPL-MaxMin} & \checkmark & \checkmark & & \checkmark & \\
    \hline
    \textbf{FPL-MTD} & & & & \checkmark & \\
    \hline
    RobustRL & & & & \checkmark & \\
    \hline
    S-Exp3 & & & & \checkmark & \\
    \hline
    FPL+GR & & & & \checkmark & \\
    \hline
    BiasedASLR & & & & \checkmark & \\
    \hline
\end{tabular}}
\caption{Information requirements of implemented algorithms.}
\label{table:algorithm-info}
\end{table*}

\begin{figure*}[t]
     \centering
     \vspace{-1em}
     \begin{subfigure}[b]{0.30\textwidth}
         \centering
         \includegraphics[width=\textwidth]{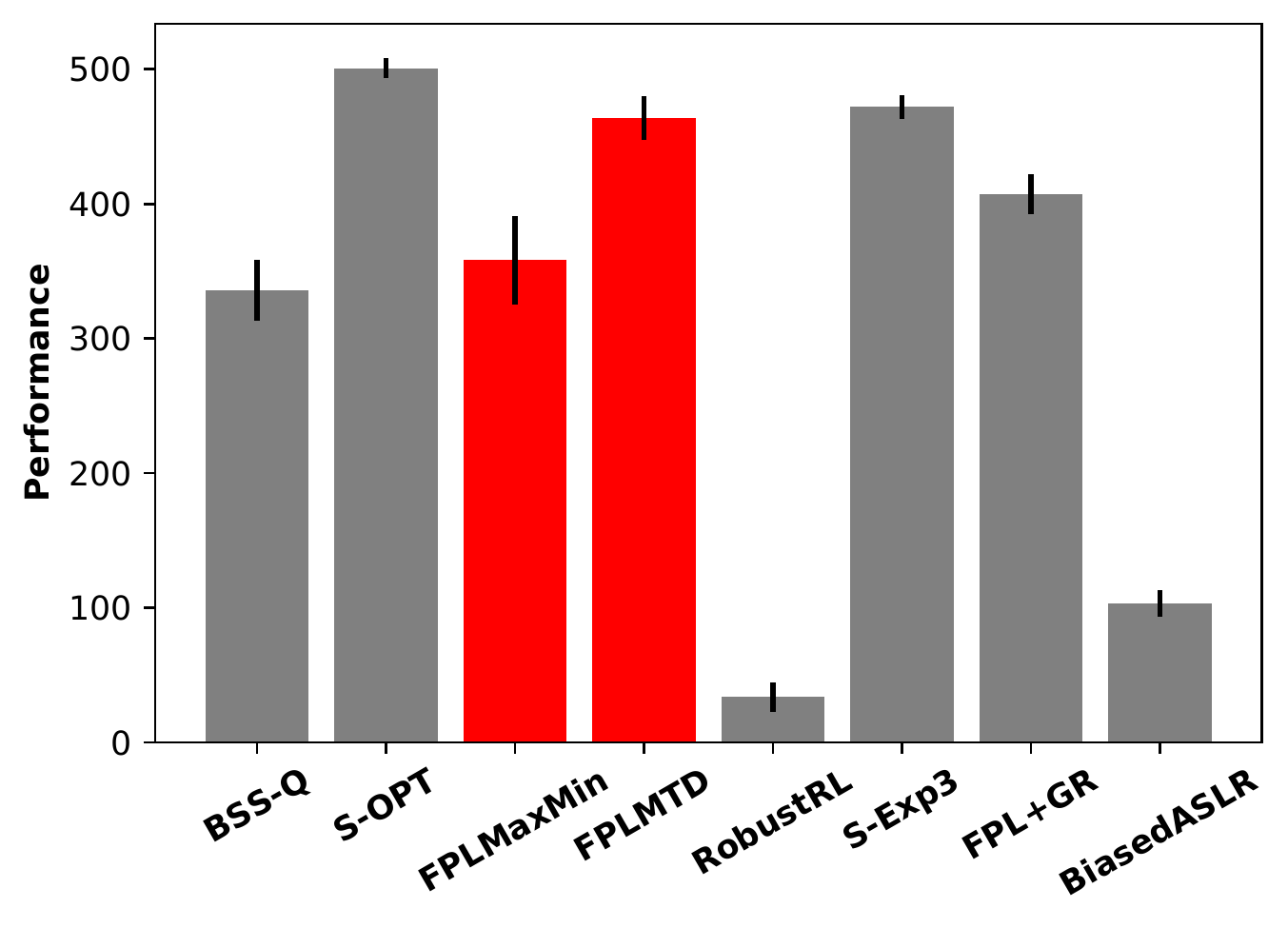}
         \caption{Best Response}
         \label{fig:nvd-best-response}
     \end{subfigure}
     \hfill
     \begin{subfigure}[b]{0.30\textwidth}
         \centering
         \includegraphics[width=\textwidth]{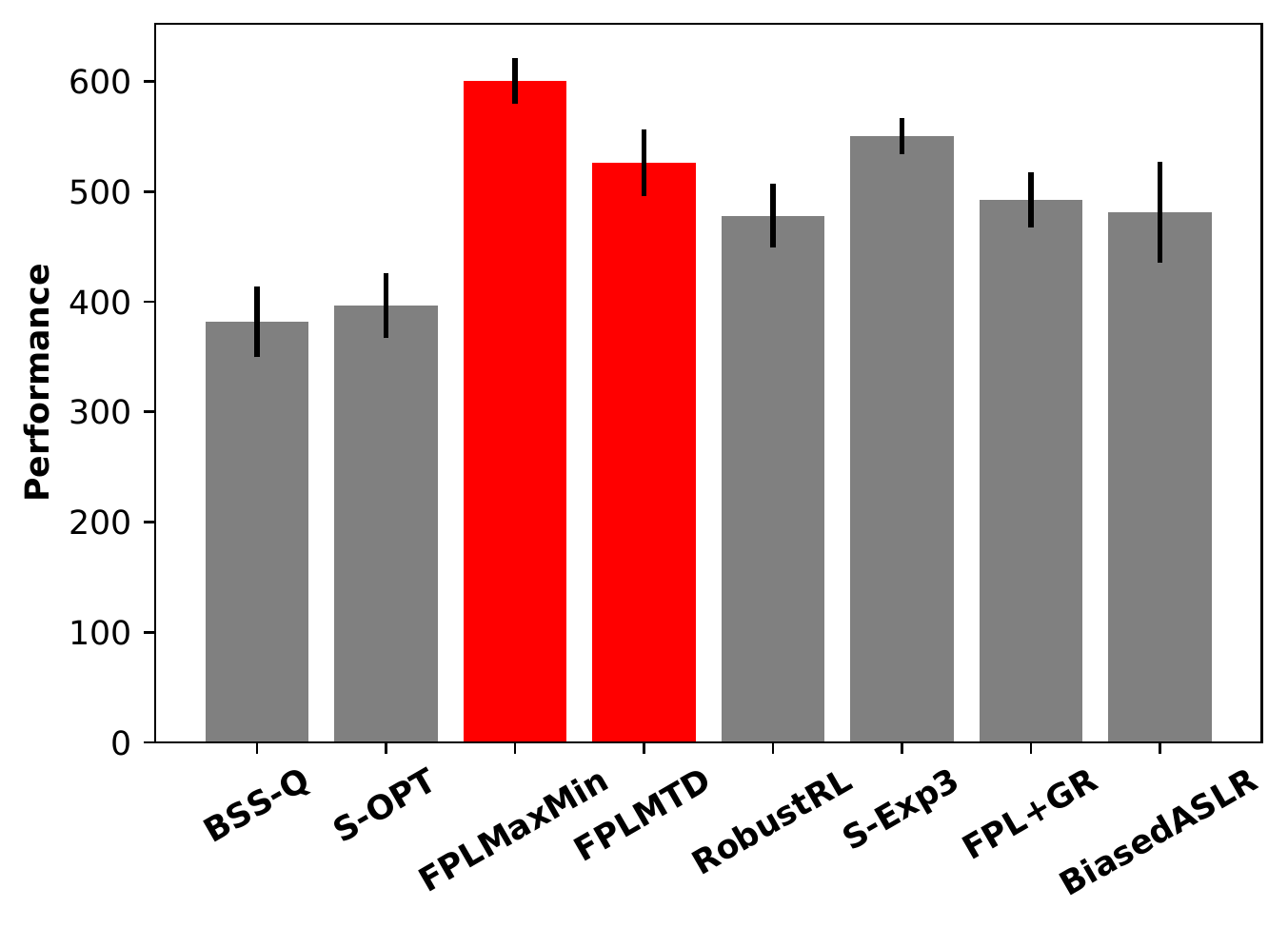}
         \caption{FPL-UE}
         \label{fig:nvd-fpl-ue}
     \end{subfigure}
     \hfill
     \begin{subfigure}[b]{0.30\textwidth}
         \centering
         \includegraphics[width=\textwidth]{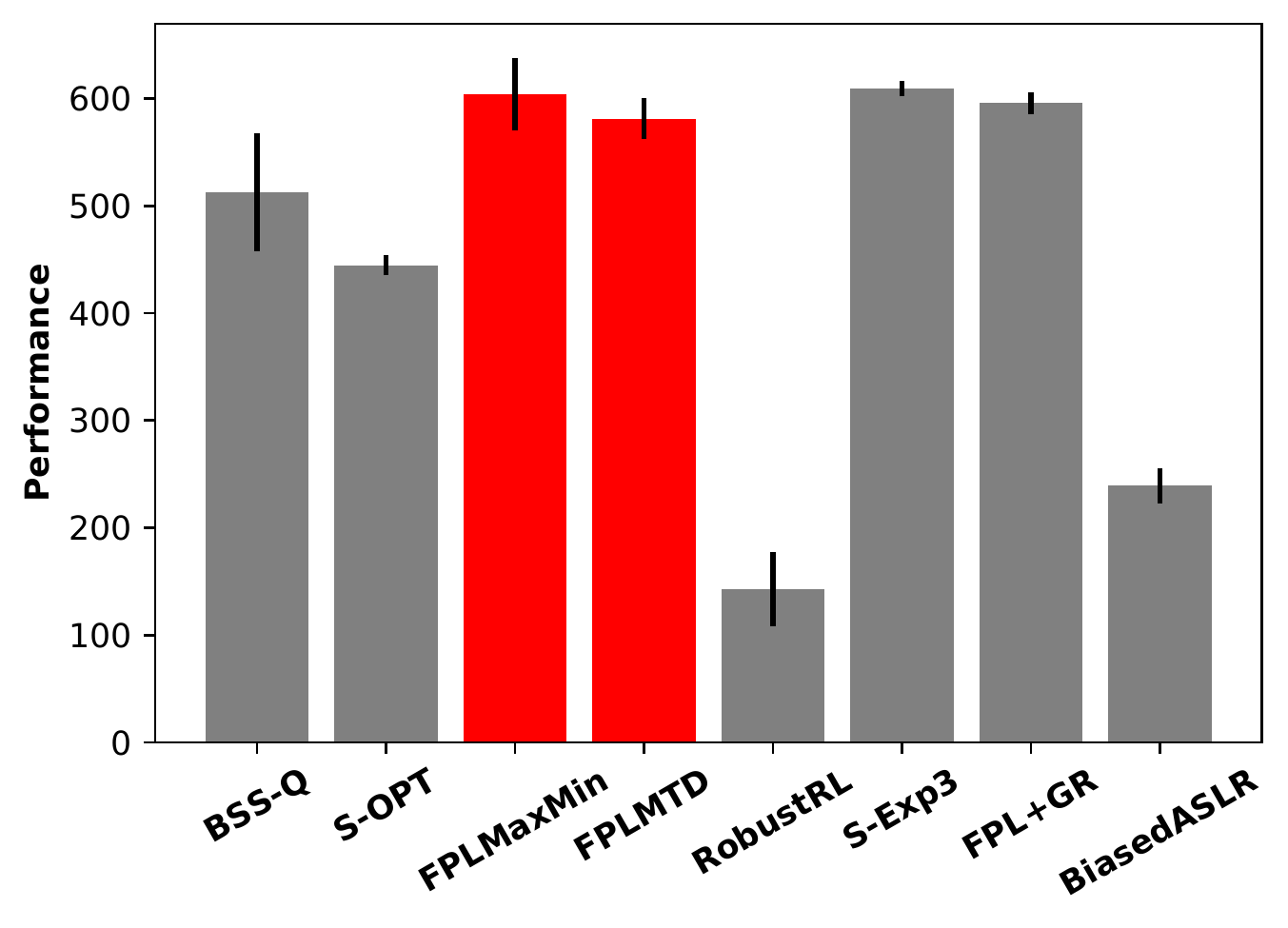}
         \caption{Stackelberg}
         \label{fig:nvd-stackelberg}
     \end{subfigure}
     \hfill
    \newline
     \begin{subfigure}[b]{0.30\textwidth}
         \centering
         \includegraphics[width=\textwidth]{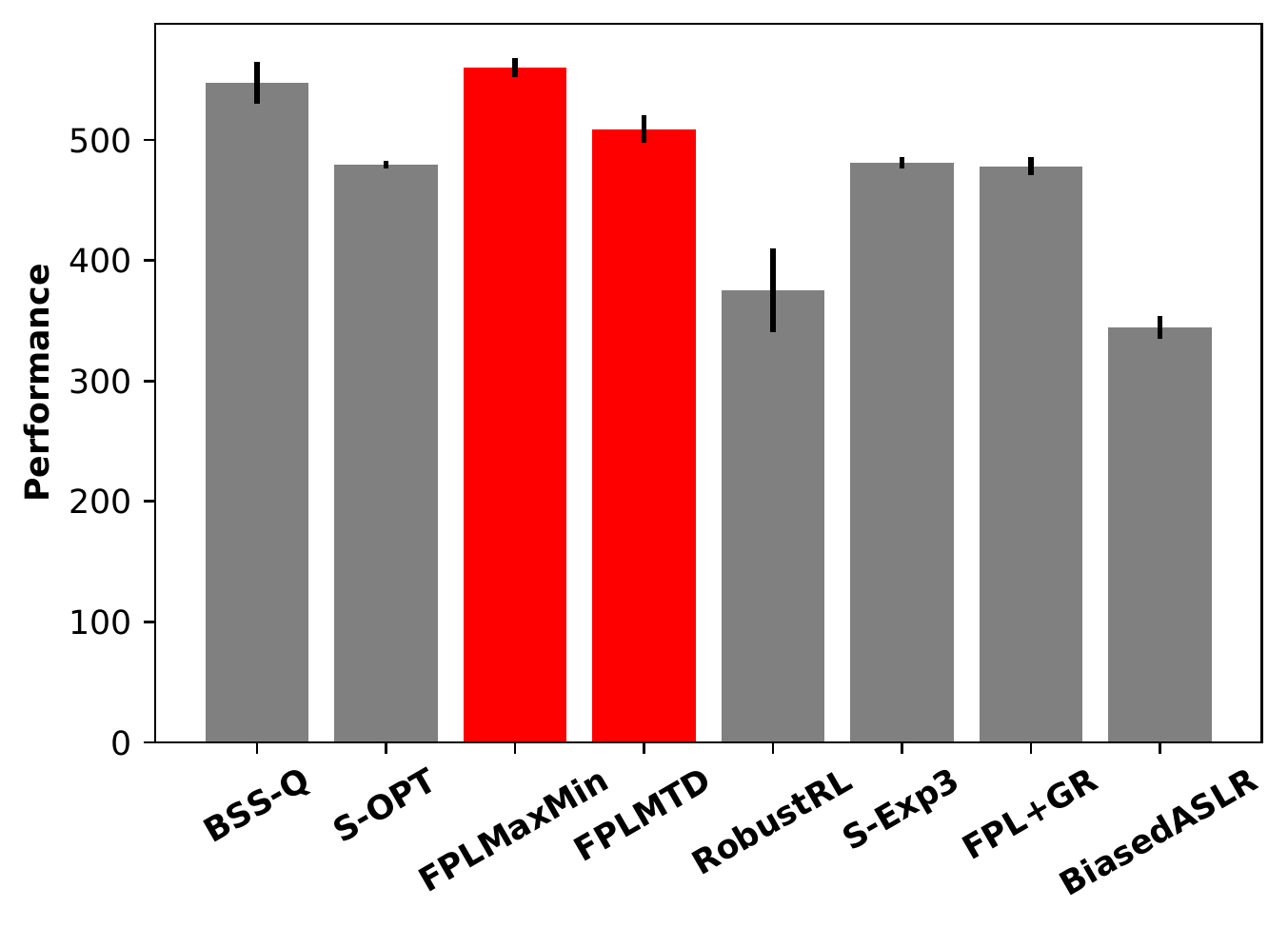}
         \caption{Random}
         \label{fig:nvd-random}
     \end{subfigure}
     \hfill
     \begin{subfigure}[b]{0.30\textwidth}
         \centering
         \includegraphics[width=\textwidth]{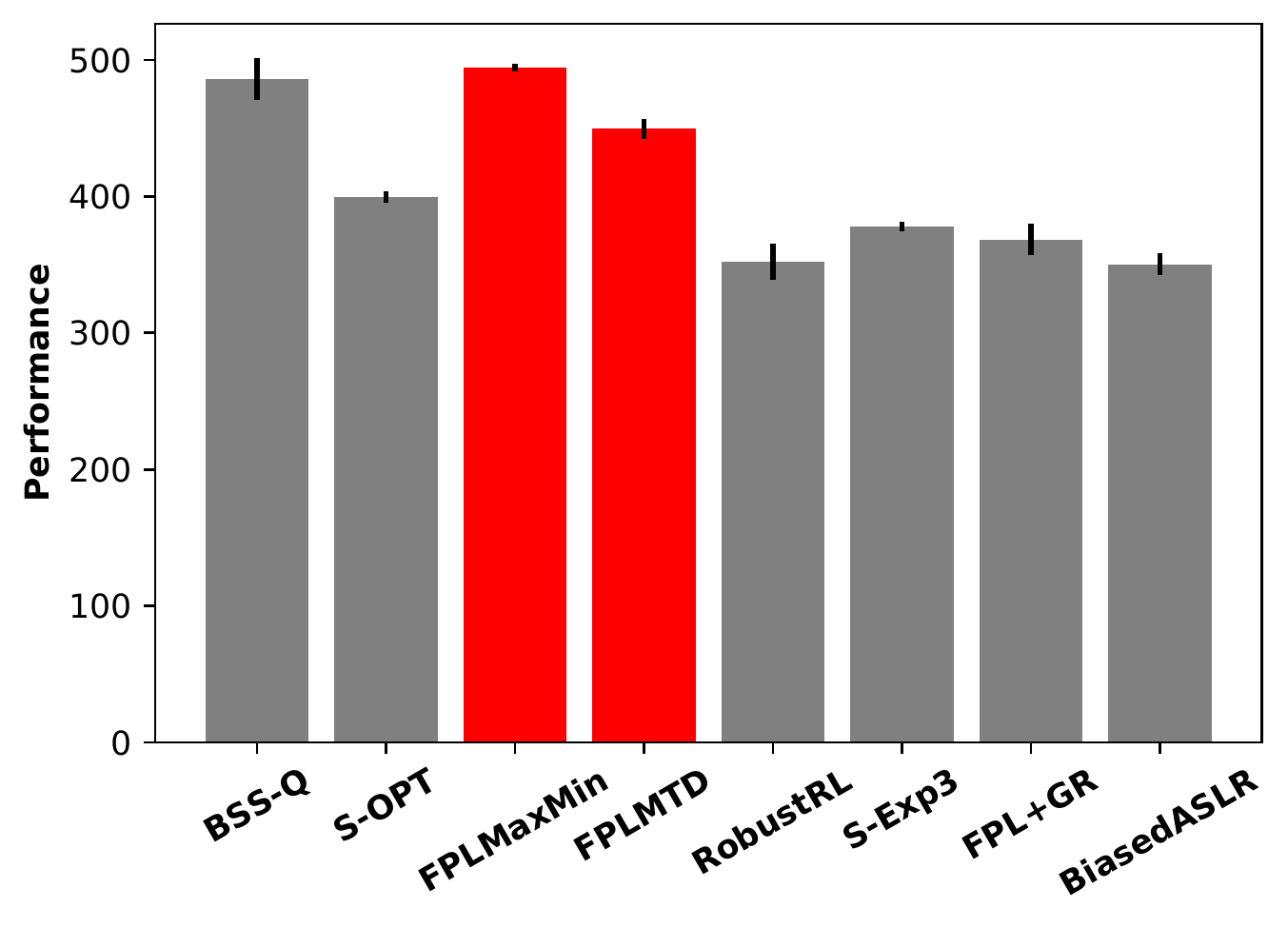}
         \caption{Quantal Response}
         \label{fig:nvd-quantal}
     \end{subfigure}
     \hfill
     \begin{subfigure}[b]{0.30\textwidth}
         \centering
         \includegraphics[width=\textwidth]{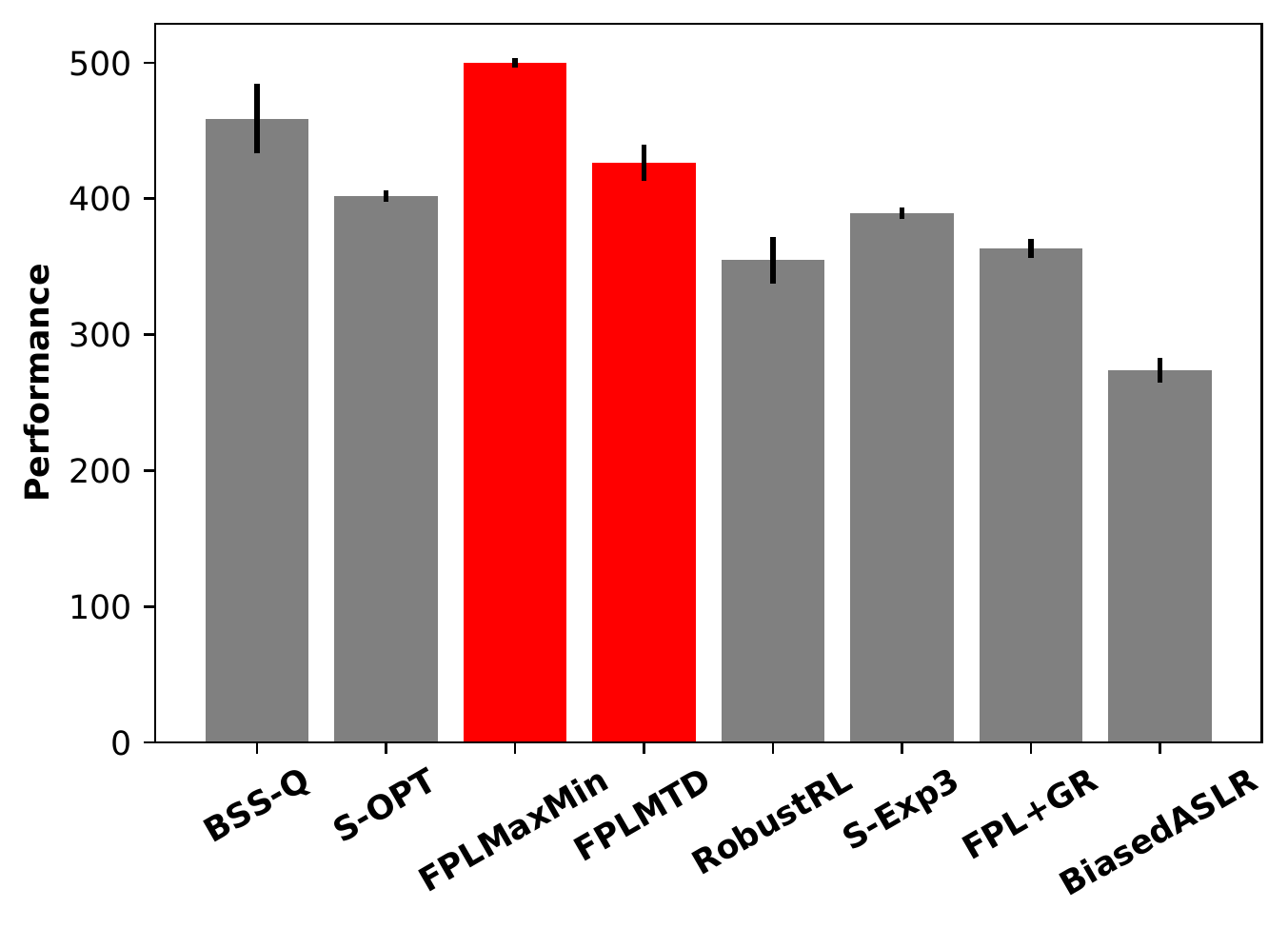}
         \caption{Biased Stochastic}
         \label{fig:nvd-biased}
     \end{subfigure}
        \caption{Performance of algorithms on the small NVD-based data. On each of the graphs, from left to right: BSS-Q, S-OPT, FPLMaxMin, FPLMTD, RobustRL, S-Exp3, FPL+GR, BiasedASLR.
        Performance is defined as the difference between the total utility of the algorithm and the total utility of a uniform random algorithm.
        }
        \label{fig:nvdsmall}
\end{figure*}

\begin{figure*}[t]
     \centering
     \begin{subfigure}[b]{0.30\textwidth}
         \centering
         \includegraphics[width=\textwidth]{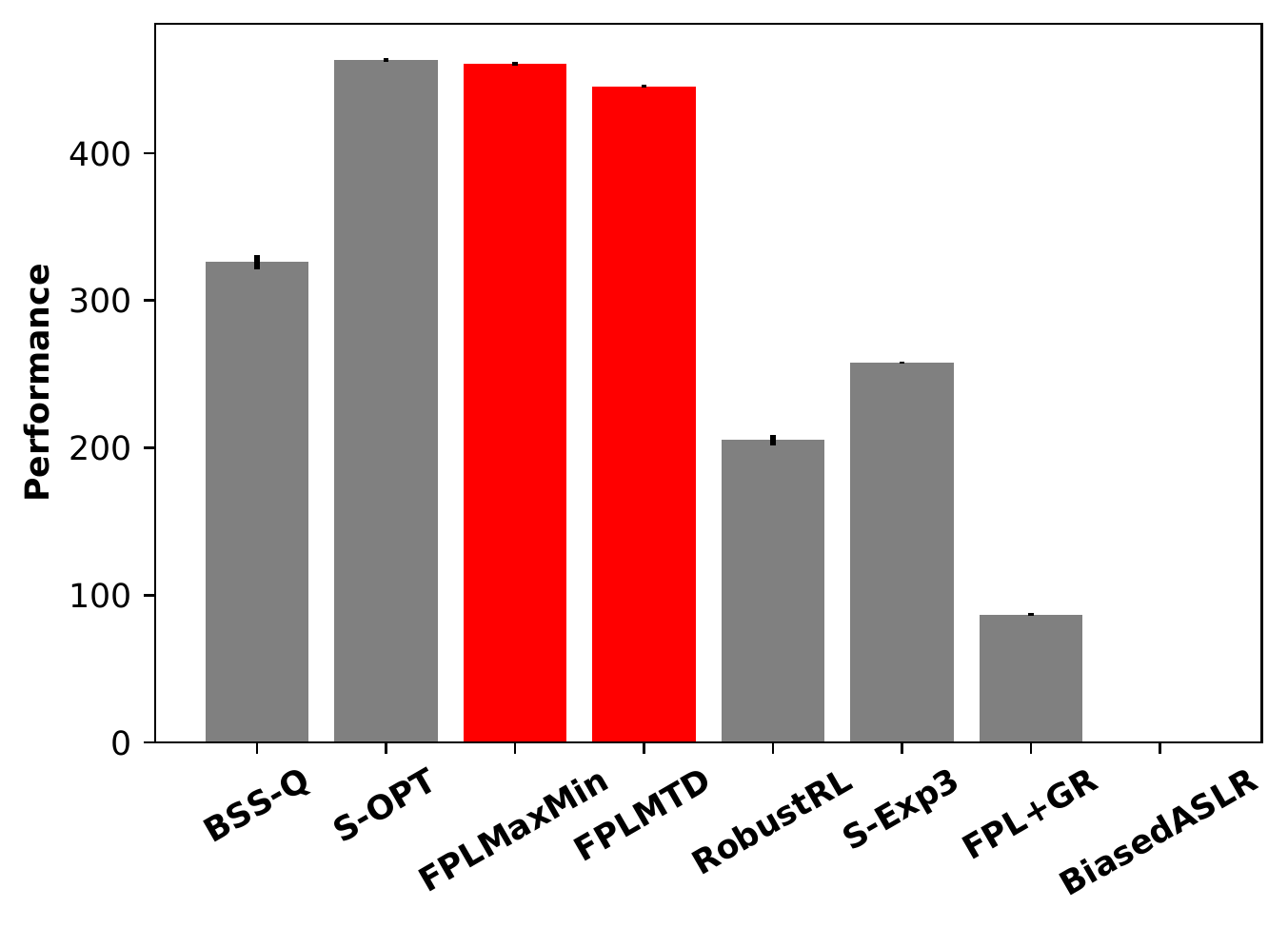}
         \caption{Best Response}
         \label{fig:nvdlarge-best-response}
     \end{subfigure}
     \hfill
     \begin{subfigure}[b]{0.30\textwidth}
         \centering
         \includegraphics[width=\textwidth]{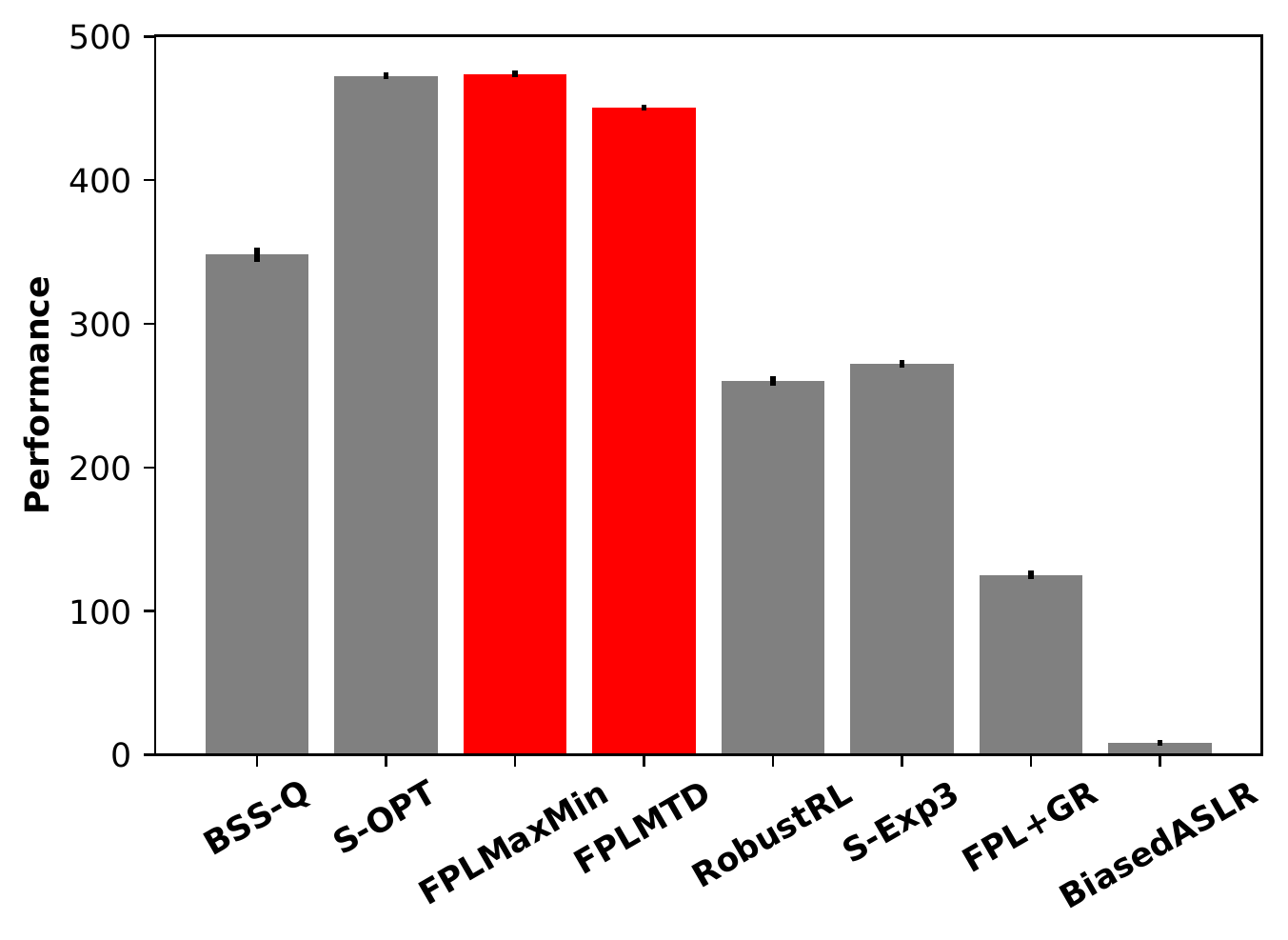}
         \caption{FPL-UE}
         \label{fig:nvdlarge-fpl-ue}
     \end{subfigure}
     \hfill
     \begin{subfigure}[b]{0.30\textwidth}
         \centering
         \includegraphics[width=\textwidth]{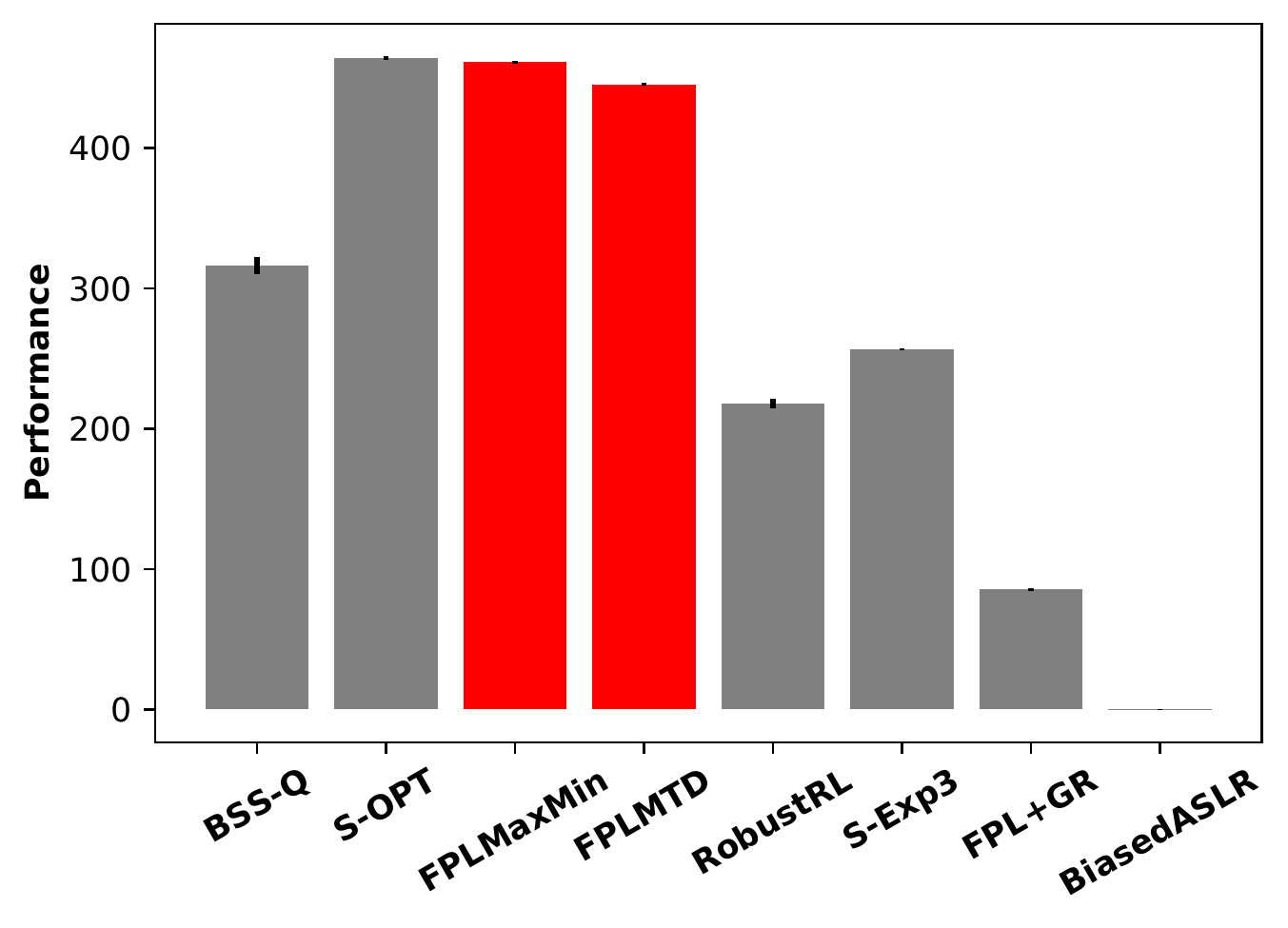}
         \caption{Stackelberg}
         \label{fig:nvdlarge-stackelberg}
     \end{subfigure}
     \hfill
     \newline
     \begin{subfigure}[b]{0.30\textwidth}
         \centering
         \includegraphics[width=\textwidth]{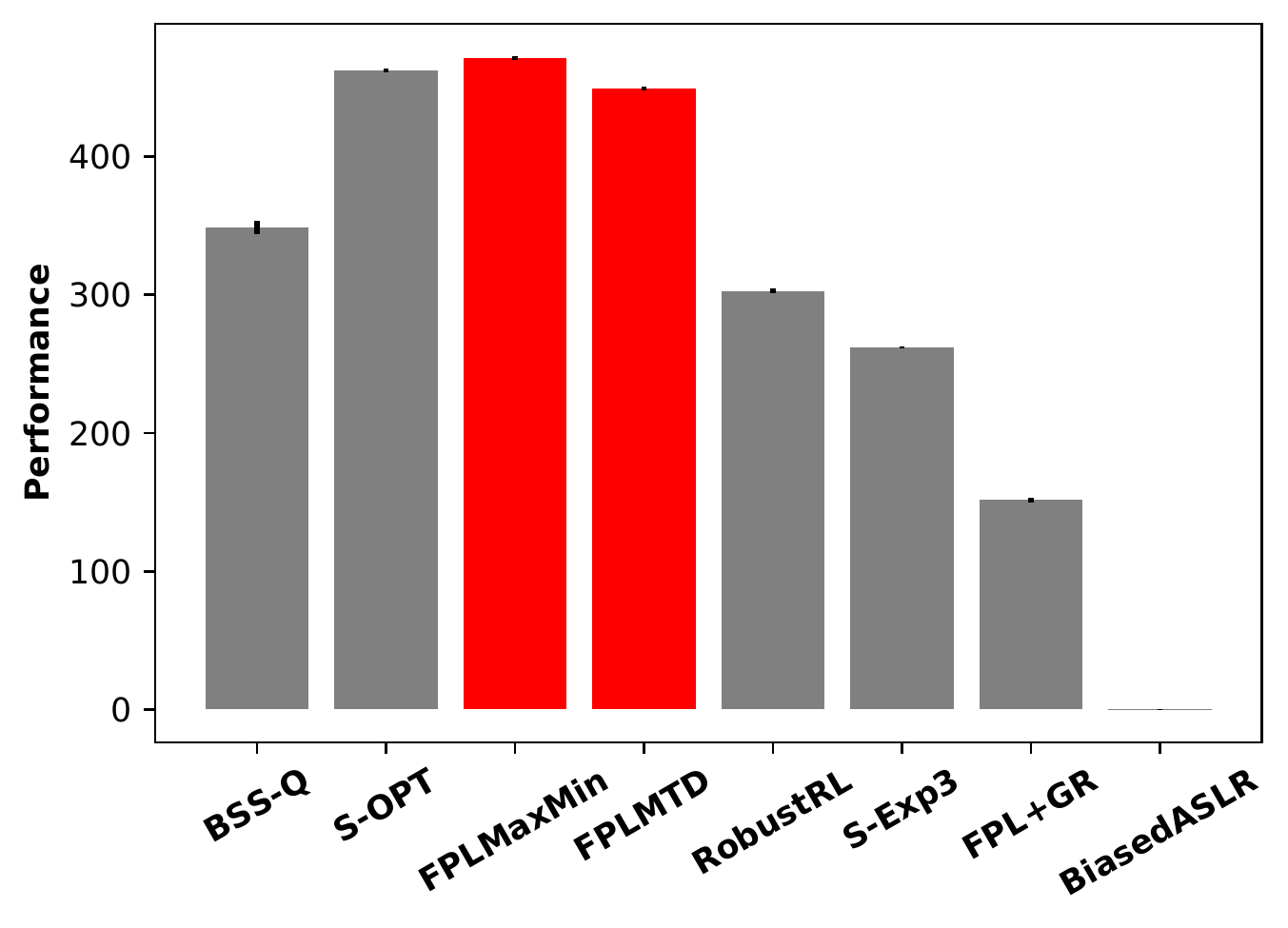}
         \caption{Random}
         \label{fig:nvdlarge-random}
     \end{subfigure}
     \hfill
     \begin{subfigure}[b]{0.30\textwidth}
         \centering
         \includegraphics[width=\textwidth]{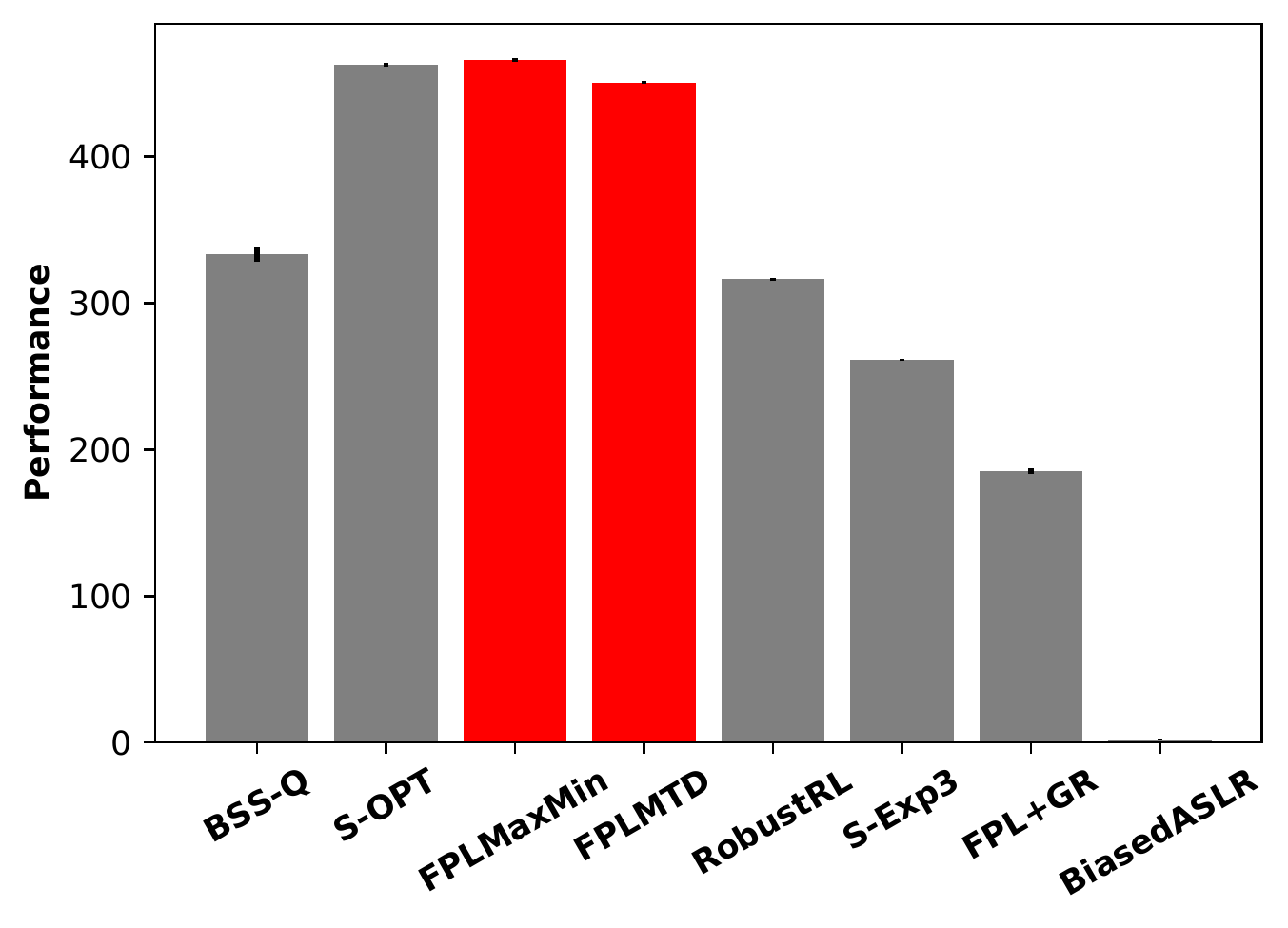}
         \caption{Quantal Response}
         \label{fig:nvdlarge-quantal}
     \end{subfigure}
     \hfill
     \begin{subfigure}[b]{0.30\textwidth}
         \centering
         \includegraphics[width=\textwidth]{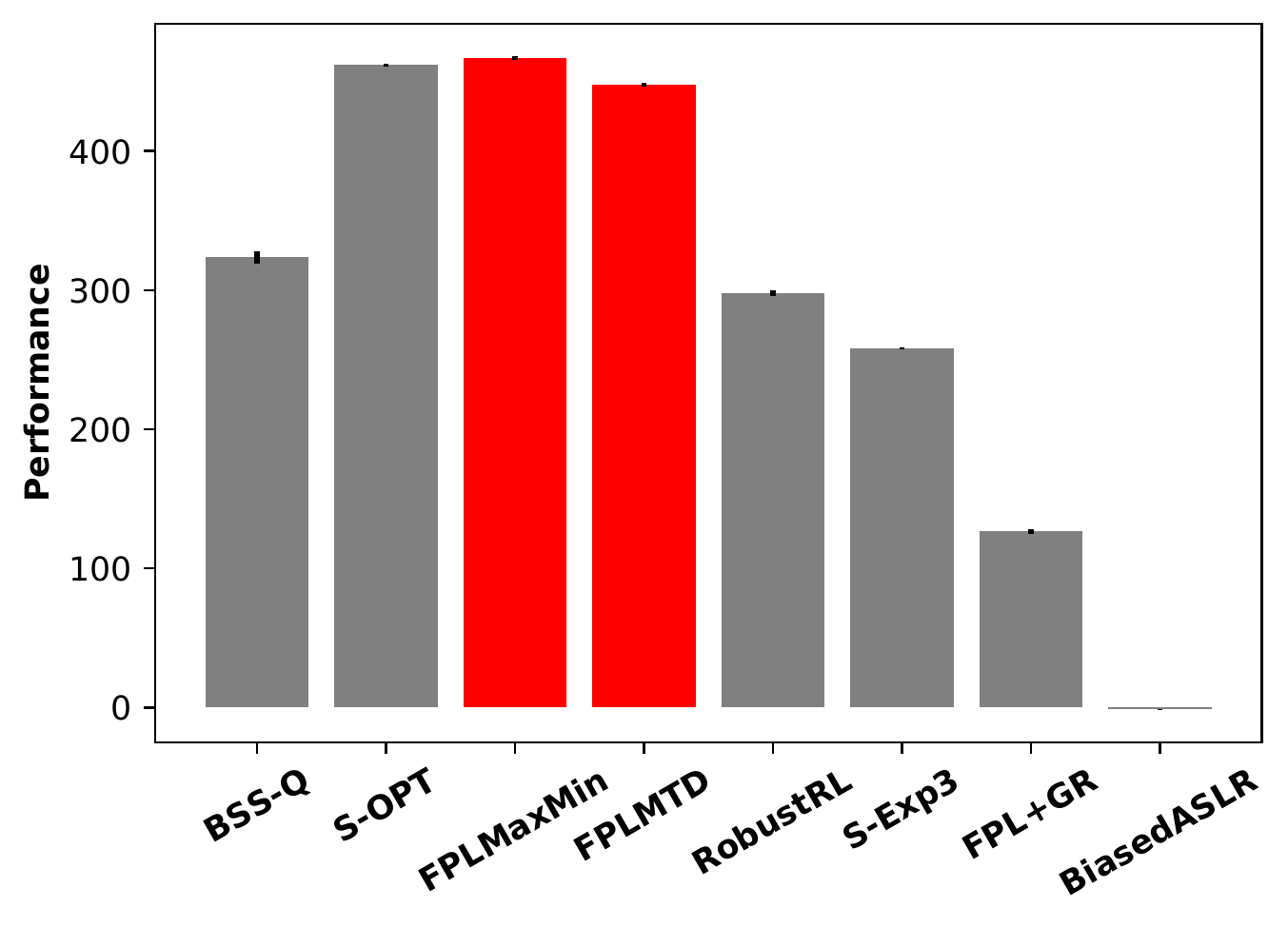}
         \caption{Biased Stochastic}
         \label{fig:nvdlarge-biased}
     \end{subfigure}
        \caption{Performance of algorithms on the large NVD-based data. On each of the graphs, from left to right: BSS-Q, S-OPT, FPLMaxMin, FPLMTD, RobustRL, S-Exp3, FPL+GR, BiasedASLR.
        Performance is defined as the difference between the total utility of the algorithm and the total utility of a uniform random algorithm.
        }
        \label{fig:nvdlarge}
\end{figure*}

\vspace{-0.4em}
\subsection{Small NVD-Based Data}\label{subsec:expts-nvd-small}

This dataset created by \cite{Sailik2016Webappmtd}, contains rewards for $4$ deployable configurations, $3$ attacker types, and slightly less than $300$ vulnerabilities. 
The vulnerabilities and the rewards for each vulnerability are generated using the Common Vulnerability Scoring System of the National Vulnerability Database (for more information, see \cite{Sailik2016Webappmtd}).


Since this dataset is relatively small, we train \emph{BSS-Q} using $100$ episodes. 
We run each defender strategy against each attacker strategy for $1000$ timesteps and plot the average performance over $10$ runs for each algorithm in Figure \ref{fig:nvdsmall}. 
The black lines at the top of each bar correspond to the standard error of the performance.

From the graphs in Figure \ref{fig:nvdsmall}, we infer that no defender strategy conclusively outperforms all the other defender strategies. Among the baselines which do not require prior knowledge, \emph{S-Exp3} outperforms all the other baselines including our algorithm in the case involving the Stackelberg attacker strategy (see Figure \ref{fig:nvd-stackelberg}). For the other attackers, our algorithms either have similar performance or surpass the existing bandit approaches with \emph{FPL-MaxMin} outperforming every other defender strategy against the Random attacker strategy (see Figure \ref{fig:nvd-random}). The state-of-the-art algorithm with full information namely \emph{S-OPT} does not seem to perform well. A deeper inspection suggests that this happens due to a high switching cost. The \emph{S-OPT} approach switches roughly $500$ times in $1000$ timesteps incurring a significant switching cost. This results in the poor performance of \emph{S-OPT} that we see in Figure \ref{fig:nvdsmall}.
 
Based on experimentation using this dataset, we realized that the data is such that two configurations are much more desirable than the other two in terms of the number of vulnerabilities they have. In addition, the instance size is small consisting of only $4$ configurations. Both these factors, make learning and convergence easier for the algorithms. Hence, we present additional results below using a larger dataset.

\vspace{-0.4em}
\subsection{Large NVD-Based Data}\label{subsec:expts:nvd-large}

To design a higher benchmark for the algorithms, we created datasets containing much larger problem instances whose optimal solution would involve switching between several configurations. Larger datasets can allow for a more realistic evaluation due to the presence of several configurations. 
The main limitation when constructing this data was the complexity of the \emph{S-OPT} and \emph{BSS-Q} approaches; computing these strategies takes exponential time and becomes infeasible for much larger game instances.

We generate datasets as follows: we choose the number of configurations uniformly at random between $10$ and $20$, the number of attackers uniformly at random between $3$ and $6$ and the number of vulnerabilities uniformly at random between $500$ and $800$. The reward for each vulnerability is sampled from the empirical distribution of the ``scores'' given to each vulnerability by the National Vulnerability Database. Each dataset corresponds to a synthetic problem instance our algorithms can be evaluated upon. The detailed procedure of dataset generation is presented in Appendix \ref{subsub:nvd-large-data}.

We generate $20$ such datasets and for each dataset, we run each of our algorithms $10$ times for $1000$ timesteps and measure the average performance. We plot the average performance over all the $20$ datasets with standard error bounds in Figure \ref{fig:nvdlarge}. Due to the large number of iterations, note that the standard error is quite low and may not be visible for all the algorithms. This indicates that our estimate of the average performance is close to the true average performance.

The graphs in Figure \ref{fig:nvdlarge} show a clear trend across the different attacker strategies. \emph{S-OPT} performs significantly well, outperforming \emph{BSS-Q}; we attribute this to the slow convergence rate of \emph{BSS-Q} resulting in a lack of convergence to the optimal strategy even after $50$ episodes.  
Across all the four attacker strategies, we find that our algorithm \emph{FPL-MaxMin}, despite using no prior information about the rewards, and algorithm \emph{FPL-MTD} that in addition to utilizing no prior information about rewards, also assumes no prior knowledge about the vulnerabilities of the system, have performance similar to \emph{S-OPT} that uses prior information about all the rewards and vulnerabilities. Our approaches even marginally surpass \emph{S-OPT} in a few cases. 
Our algorithms also perform significantly better than the other baselines \emph{RobustRL}, \emph{S-Exp3}, \emph{FPL+GR} and \emph{BiasedASLR}. More specifically, against the Best Response attacker strategies, \emph{FPL-MTD} has a performance that is $116\%$, $72\%$ and $412\%$ greater than that of Robust-RL, \emph{S-Exp3} and \emph{FPL+GR} respectively. We obtain similar percentages for the other attacker strategies as well (as can be seen in Figure \ref{fig:nvdlarge}). The \emph{BiasedASLR} algorithm has a performance close to $0$ and is significantly outperformed by all other algorithms; this indicates that the algorithm has the same average total utility as that of the uniform random algorithm.

A potential concern while deploying the proposed algorithms in real-world settings can be that they need some number of interactions to learn appropriate reward estimates. During this period, the defender reward and hence the deployed configuration can depend highly on the switching costs. As a result, the deployed configuration may suffer from more attacks while trying to avoid switching. This issue can be resolved by making an implementation choice of setting a higher value for $\gamma$ for a certain number of initial timesteps, thus reducing the effect of the initial resistance to switching.

\section{Conclusions and Future Work}
In this work, we study the problem of generating high quality switching strategies for Moving Target Defense when the defender does not have complete information regarding the vulnerabilities of the system and the strategies of the different attacker types. Building upon the classic multi-armed bandit algorithm FPL, we propose two scalable algorithms: \emph{FPL-MTD} and \emph{FPL-MaxMin}. These algorithms plan for two different levels of information that the defender may have a priori and use them to generate good switching strategies. We showcase, using data from the National Vulnerability Database, that our approaches significantly outperform all the existing algorithms which use the same amount of information as our approaches. They also achieve a slightly better performance than the state-of-the-art, \emph{S-OPT}, despite using much less information than \emph{S-OPT} by eliminating the need to have prior knowledge about vulnerabilities and attacker rewards, and using only the observed defender rewards based on the current attack impact.
In terms of future work, we plan to develop switching strategies for boundedly rational attackers. Algorithms such as SUQR \cite{Nguyen2014Suqr} have been built with this assumption in the context of security games which we plan to adapt here. Another direction of future work is to study bandit-based approaches for additional MTD applications such as intrusion detection systems and network security. 

\bibliographystyle{unsrt}  
\bibliography{references}

\appendix
\newpage
\onecolumn

\section{Notation Table}\label{sec:notation-table}
\begin{table}[!h]
    \centering
    \begin{tabular}{|c|l|}
        \hline
        \textbf{Notation} & \textbf{Description} \\
        \hline
        $\theta$ & Defender \\
        \hline
        $\Psi$ & Attacker and the set of attacker types \\
        \hline 
        $\psi_i/\psi$ & Attacker type $i$/an attacker type \\
        \hline 
        $\tau$ & Number of attacker types \\
        \hline 
        $ P$ & Probability distribution across attacker types \\
        \hline
        $ C$ & Set of deployable configurations \\
        \hline
        $c_i / c$ & Configuration $i$/a configuration \\
        \hline
        $ V_c$ & Vulnerability set of configuration $c$ \\
        \hline
        $v_i/v$ & Vulnerability $i$/a vulnerability \\
        \hline 
        $ V$ & Complete set of vulnerabilities \\
        \hline 
        $T$ & Total number of rounds/timesteps \\
        \hline 
        $t$ & A specific time step \\
        \hline 
        $r^t_\theta/r^t$ & Defender reward function at time step $t$ \\
        \hline
        $r^t_{\psi}$ & Attacker reward function for type $\psi$ at time step $t$ \\
        \hline 
        $\psi_{f(t)}$ & Attacker type attacking at time step $t$ \\
        \hline
        $ I_t$ & Additional information set received along with the rewards at the end of every round \\
        \hline 
        $s$ & Switching cost function \\
        \hline 
        $ D / (d_1, d_2, \dots, d_T)$ & Defender strategy profile \\
        \hline
        $d_t$ & Configuration deployed at time step $t$ \\
        \hline
        $(a_1^{f(1)}, a_2^{f(2)}, \dots, a_T^{f(t)})$ & Attacker strategy profile \\
        \hline 
        $a_t^{f(t)}/a_t$ & Exploit attempted at time step $t$ by attacker type $\psi_{f(t)}$ \\
        \hline
        $TU( D)$ & Total utility of a defender strategy $ D$ \\
        \hline 
        $\hat r^t$ & Defender reward estimate at time $t$ \\
        \hline
        $\mathbb{I}(.)$ & Indication function: takes value one if only if expression inside is true \\
        \hline 
        $\gamma, \eta$ & Hyperparameters of FPL-MTD and FPL-MaxMin \\
        \hline 
        $b$ & Budget \\
        \hline
        $p_v$ & Price of fixing vulnerability $v$ \\
        \hline
    \end{tabular}
    \caption{Summary of notations used}
    \label{table:notation}
\end{table}
\section{Missing Results from Section \ref{sec:heuristics}}
\begin{prop}\label{prop:gr}
The Geometric Resampling procedure will terminate with a probability of at least $1 - e^{-T}$ when $M \ge \frac{| C|T}{\gamma}$ where $| C|$ is the number of deployable configurations, $T$ is the total number of time steps and $\gamma$ is the exploration parameter of \emph{FPL-MTD} (Algorithm \ref{algo:fpl-mtd})
\end{prop}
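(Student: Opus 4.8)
The plan is to bound the probability that a single iteration of the resampling loop fails to reproduce $d_t$, and then combine independence across the $M$ iterations with a standard exponential tail inequality. The procedure ``terminates'' (i.e.\ returns the true first-match index rather than being stuck at the truncation cap $M$) precisely when at least one of the $M$ simulated FPL steps produces $\tilde{d} = d_t$, so it suffices to lower bound the probability of observing at least one match.

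First I would lower bound the per-iteration success probability. Each simulated FPL step (lines \ref{line:fpl-start}--\ref{line:fpl-end}) begins by sampling $q \sim \texttt{Bernoulli}(\gamma)$; with probability $\gamma$ it selects a configuration uniformly at random, so any fixed configuration---in particular $d_t$---is chosen with probability at least $\frac{\gamma}{|C|}$, independently of whatever the perturbed-leader branch would have done. Writing $p := \Pr[\tilde{d} = d_t]$, this yields the unconditional bound $p \ge \frac{\gamma}{|C|}$.

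Next, since the $M$ simulations are independent draws from the same distribution, the probability that none of them matches $d_t$ is $(1-p)^M \le \left(1 - \frac{\gamma}{|C|}\right)^M$. Applying the elementary inequality $(1-x)^n \le e^{-xn}$ gives $\left(1 - \frac{\gamma}{|C|}\right)^M \le e^{-\gamma M / |C|}$. Substituting the hypothesis $M \ge \frac{|C|\,T}{\gamma}$ yields $\gamma M / |C| \ge T$, so the failure probability is at most $e^{-T}$, and therefore the procedure terminates with probability at least $1 - e^{-T}$, as claimed.

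I do not anticipate a genuine obstacle: once the per-iteration success probability is correctly identified, the conclusion is a one-line tail bound. The only point requiring care is recognizing that the uniform exploration component \emph{alone} guarantees $p \ge \gamma/|C|$, independently of the analytically intractable perturbed-leader selection, so that no closed form for $\Pr[d_t = c]$ is ever needed---which is exactly why geometric resampling is used in the first place.
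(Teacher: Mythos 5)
Your proof is correct and follows essentially the same argument as the paper: lower-bound the per-iteration match probability by $\frac{\gamma}{|C|}$ via the uniform exploration branch, bound the probability of $M$ consecutive failures by $\left(1 - \frac{\gamma}{|C|}\right)^M$, and apply $1 - x \le e^{-x}$ together with $M \ge \frac{|C|T}{\gamma}$ to get the $e^{-T}$ bound. No meaningful differences.
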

\begin{proof}
At each time step, Algorithm \ref{algo:fpl-mtd} deploys any configuration with a probability of at least $\frac{\gamma}{| C|}$ since it explores with probability $\gamma \in [0, 1]$. 
The Geometric Resampling procedure terminates when the sampled configuration $\tilde{d}$ is same as the actual configuration deployed at that round $d_t$. 
The probability of this event occurring in any given round is lower bounded by $\frac{\gamma}{| C|}$. 
The probability that this event does not occur with $M \ge \frac{| C| T}{\gamma}$ iterations is upper bounded by 
\begin{align*}
    \bigg (1 - \frac{\gamma}{| C|}\bigg )^M \le \bigg (1 - \frac{\gamma}{| C|}\bigg )^{\frac{| C|T}{\gamma}} \le e^{- \frac{\gamma}{| C|}{\frac{| C|T}{\gamma}} } \le  e^{-T}
\end{align*}
The second inequality arises from the fact that $e^{-x} \ge 1 - x$ for any real valued $x$. Replacing $x = \frac{\gamma}{| C|}$ gives us the third expression.
Therefore, the probability of termination is at least $1 - e^{-T}$.
\end{proof}

\section{Additional Experimental Results}\label{sec:additional-expts}
We also evaluate our algorithms on synthetically generated general sum datasets and zero sum datasets. The procedure used for dataset generation is presented in Appendix \ref{subsub:synthetic-data}.

We generate $20$ such general sum and zero sum datasets. For each dataset, we run each of our algorithms $10$ times for $1000$ timesteps and measure the average performance. We plot the average performance over all the $20$ datasets with standard error bounds in Figures \ref{fig:general-sum} and \ref{fig:zero-sum}.

The trends are the same as that of the large NVD-based dataset. For a discussion on these trends, please refer to Section \ref{subsec:expts:nvd-large}.

\begin{figure*}[!tp]
     \centering
     \begin{subfigure}[b]{0.24\textwidth}
         \centering
         \includegraphics[width=\textwidth]{./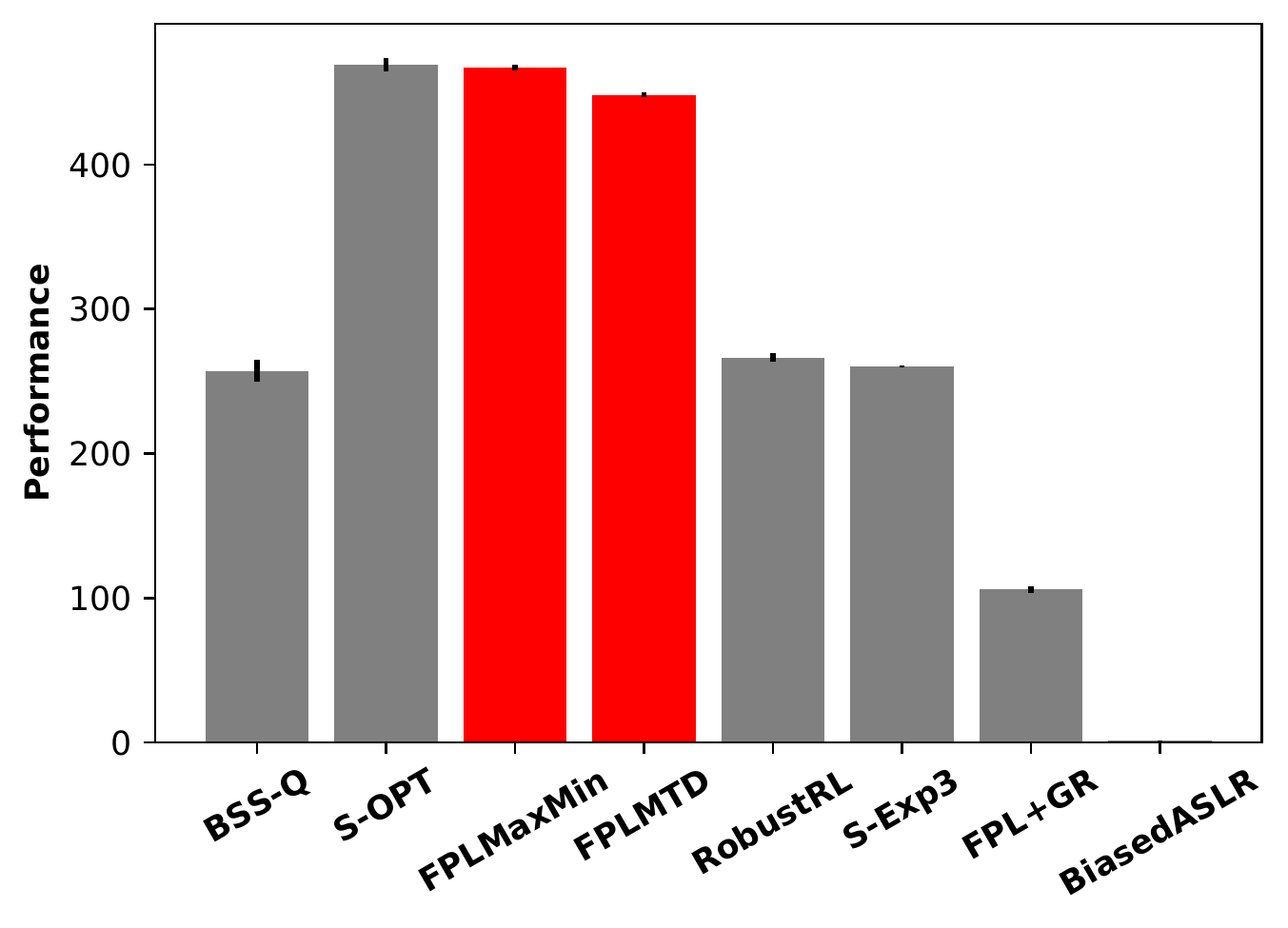}
         \caption{Best Response}
         \label{fig:general-sum-best-response}
     \end{subfigure}
     \hfill
     \begin{subfigure}[b]{0.24\textwidth}
         \centering
         \includegraphics[width=\textwidth]{./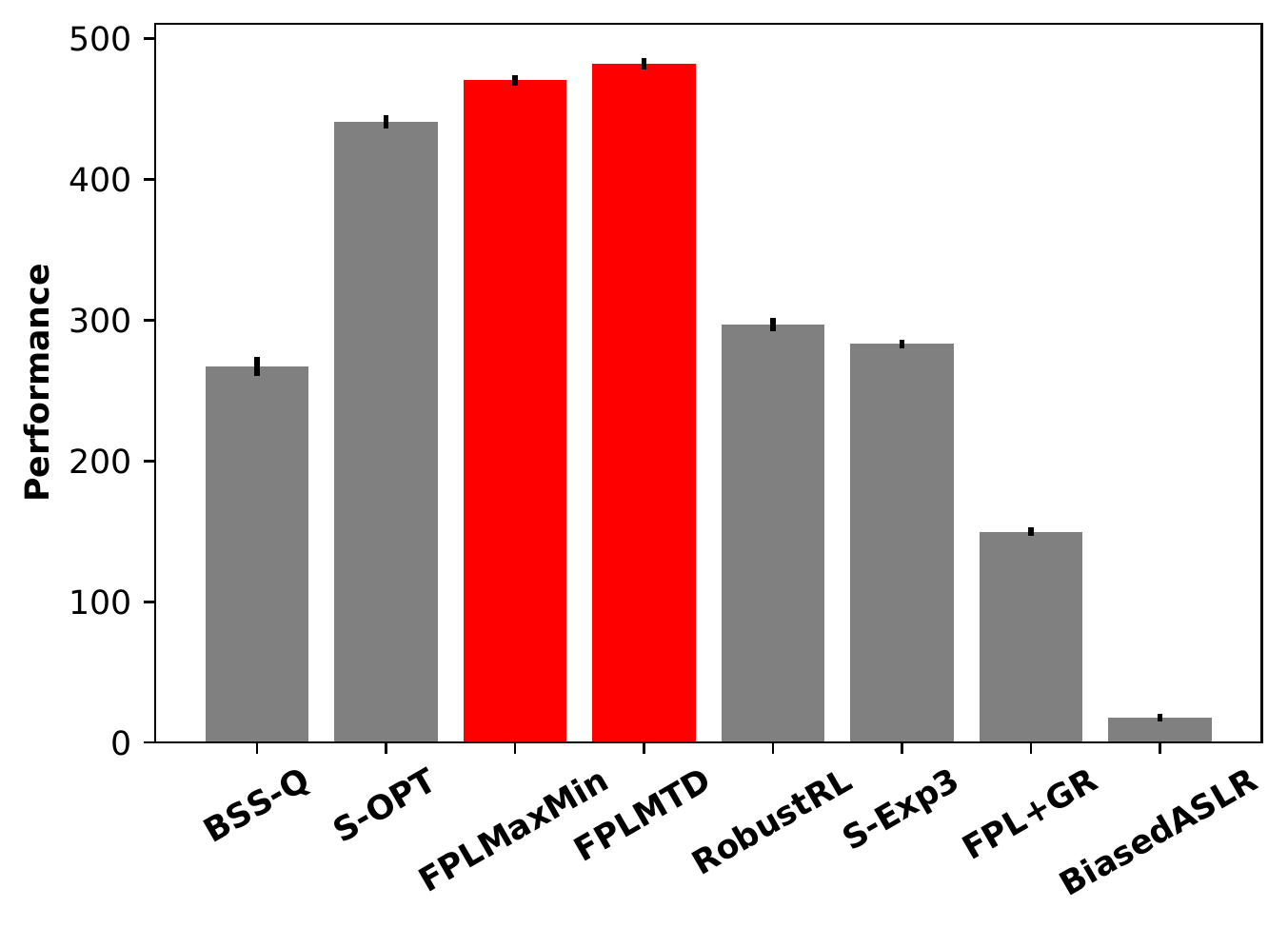}
         \caption{FPL-UE}
         \label{fig:general-sum-fpl-ue}
     \end{subfigure}
     \hfill
     \begin{subfigure}[b]{0.24\textwidth}
         \centering
         \includegraphics[width=\textwidth]{./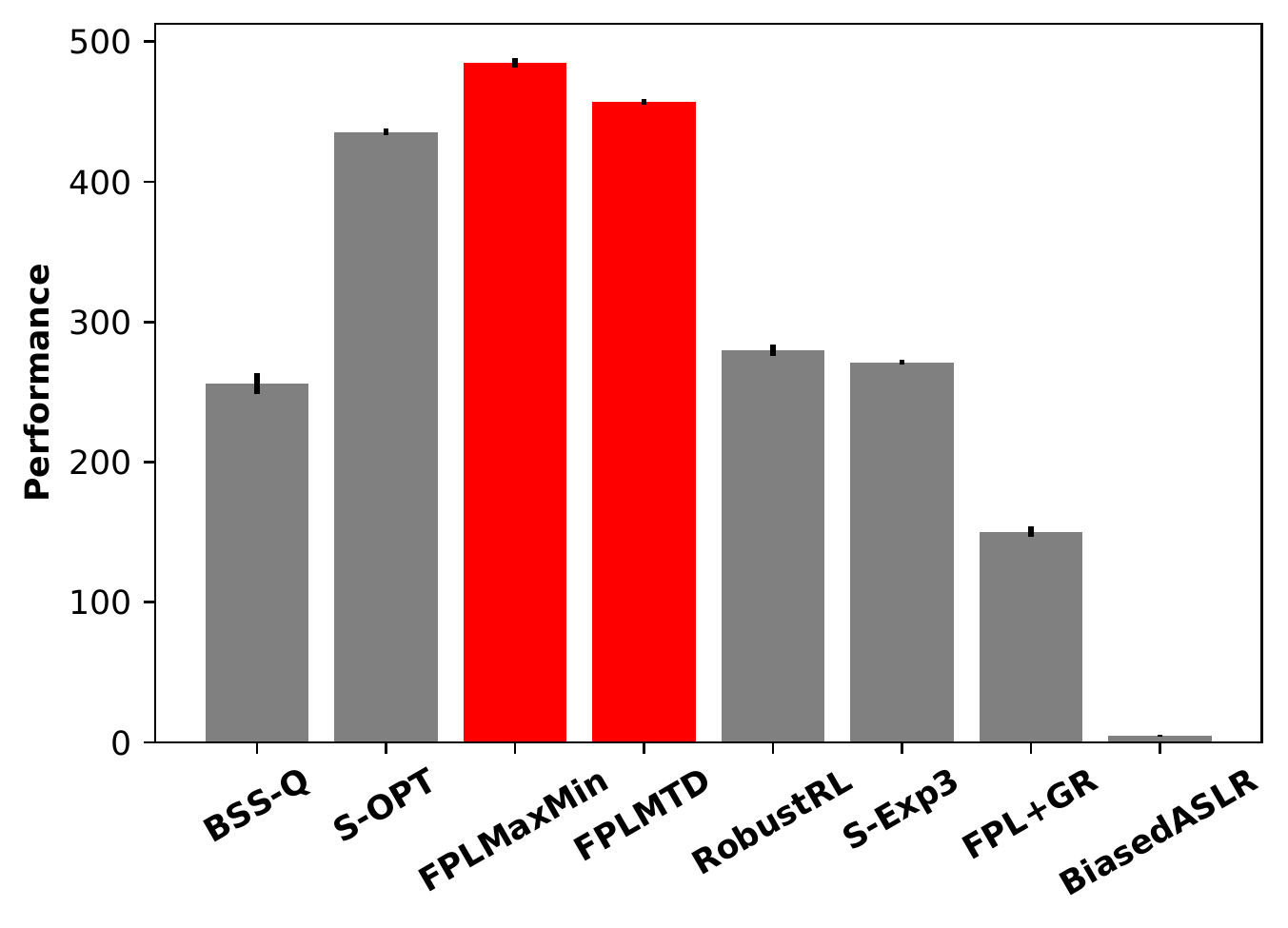}
         \caption{Stackelberg}
         \label{fig:general-sum-stackelberg}
     \end{subfigure}
     \hfill
     \begin{subfigure}[b]{0.24\textwidth}
         \centering
         \includegraphics[width=\textwidth]{./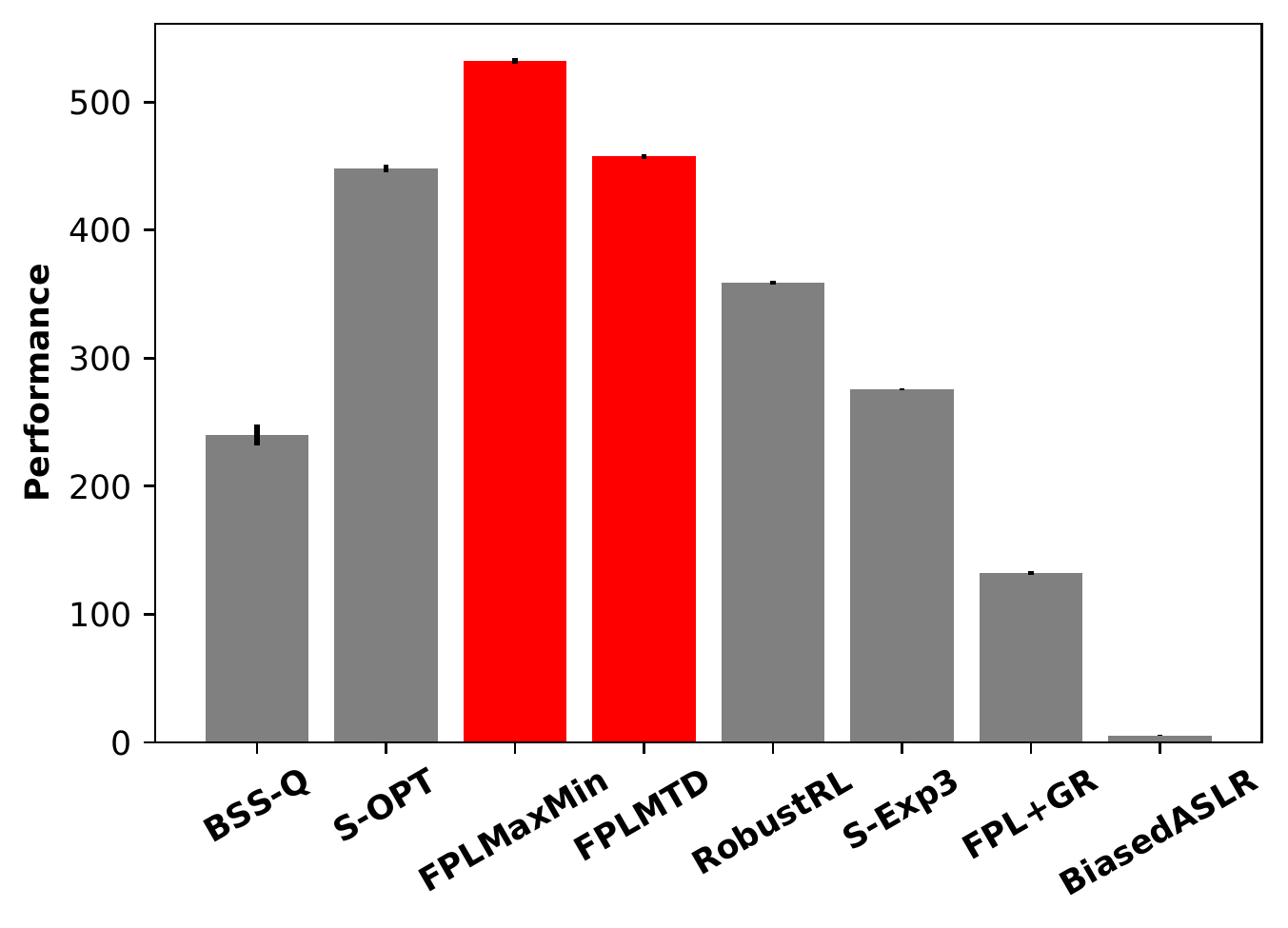}
         \caption{Random}
         \label{fig:general-sum-random}
     \end{subfigure}
        \caption{Performance of algorithms on general sum synthetic data. On each of the graphs, from left to right: BSS-Q, S-OPT, FPLMaxMin, FPLMTD, RobustRL, S-Exp3, FPL+GR}
        \label{fig:general-sum}
\end{figure*}

\begin{figure*}[!tp]
     \centering
     \begin{subfigure}[b]{0.24\textwidth}
         \centering
         \includegraphics[width=\textwidth]{./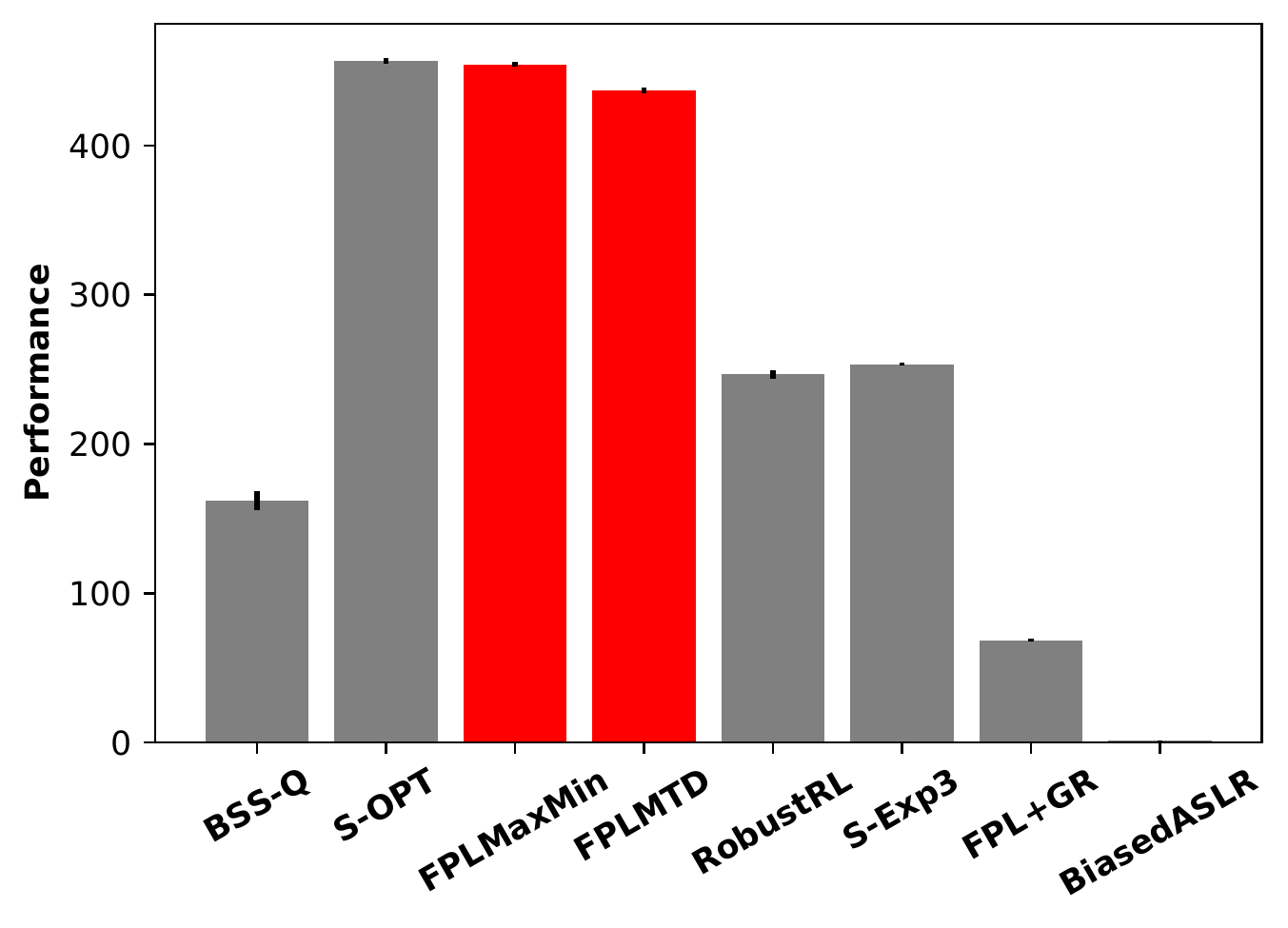}
         \caption{Best Response}
         \label{fig:zero-sum-best-response}
     \end{subfigure}
     \hfill
     \begin{subfigure}[b]{0.24\textwidth}
         \centering
         \includegraphics[width=\textwidth]{./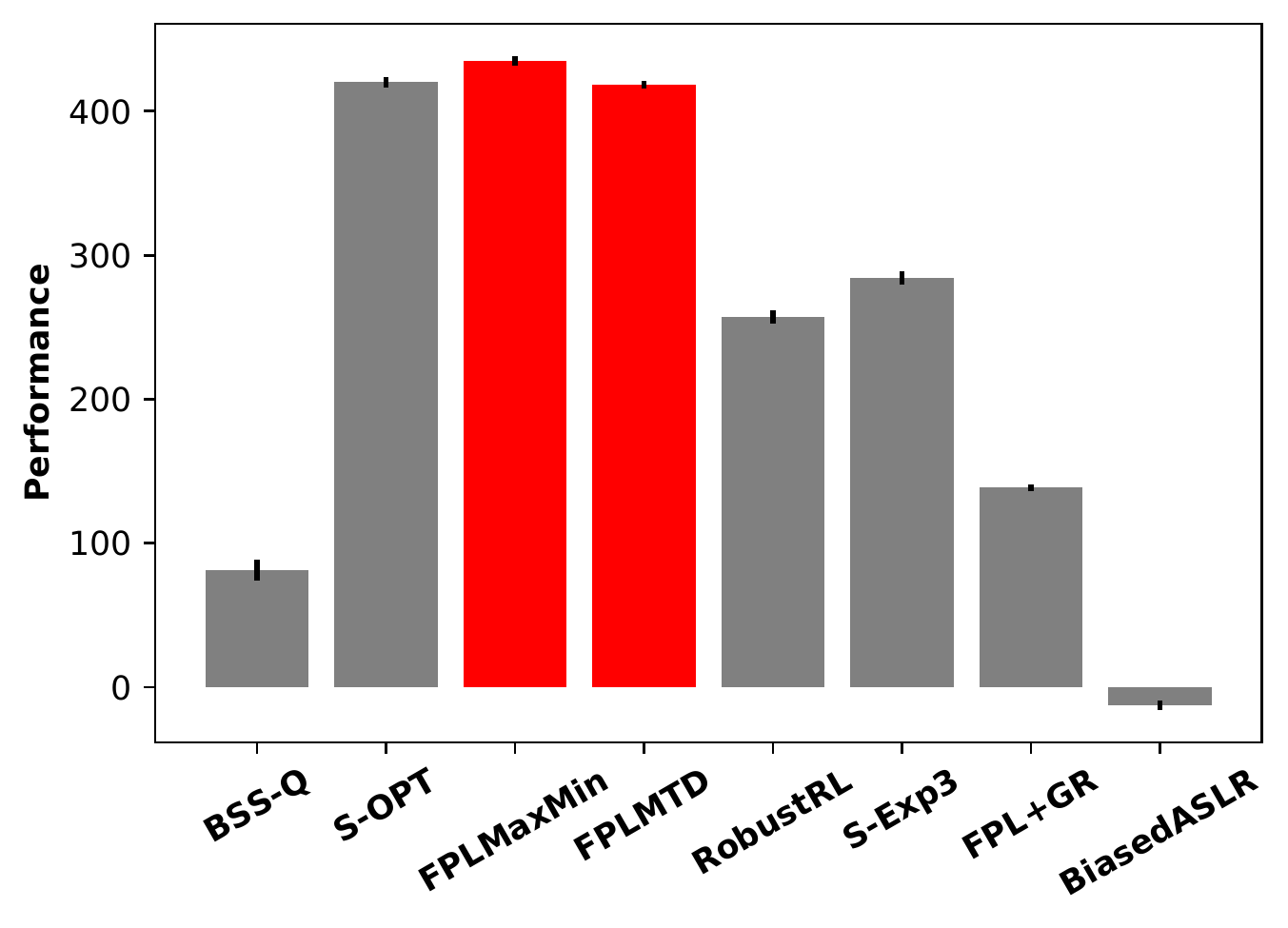}
         \caption{FPL-UE}
         \label{fig:zero-sum-fpl-ue}
     \end{subfigure}
     \hfill
     \begin{subfigure}[b]{0.24\textwidth}
         \centering
         \includegraphics[width=\textwidth]{./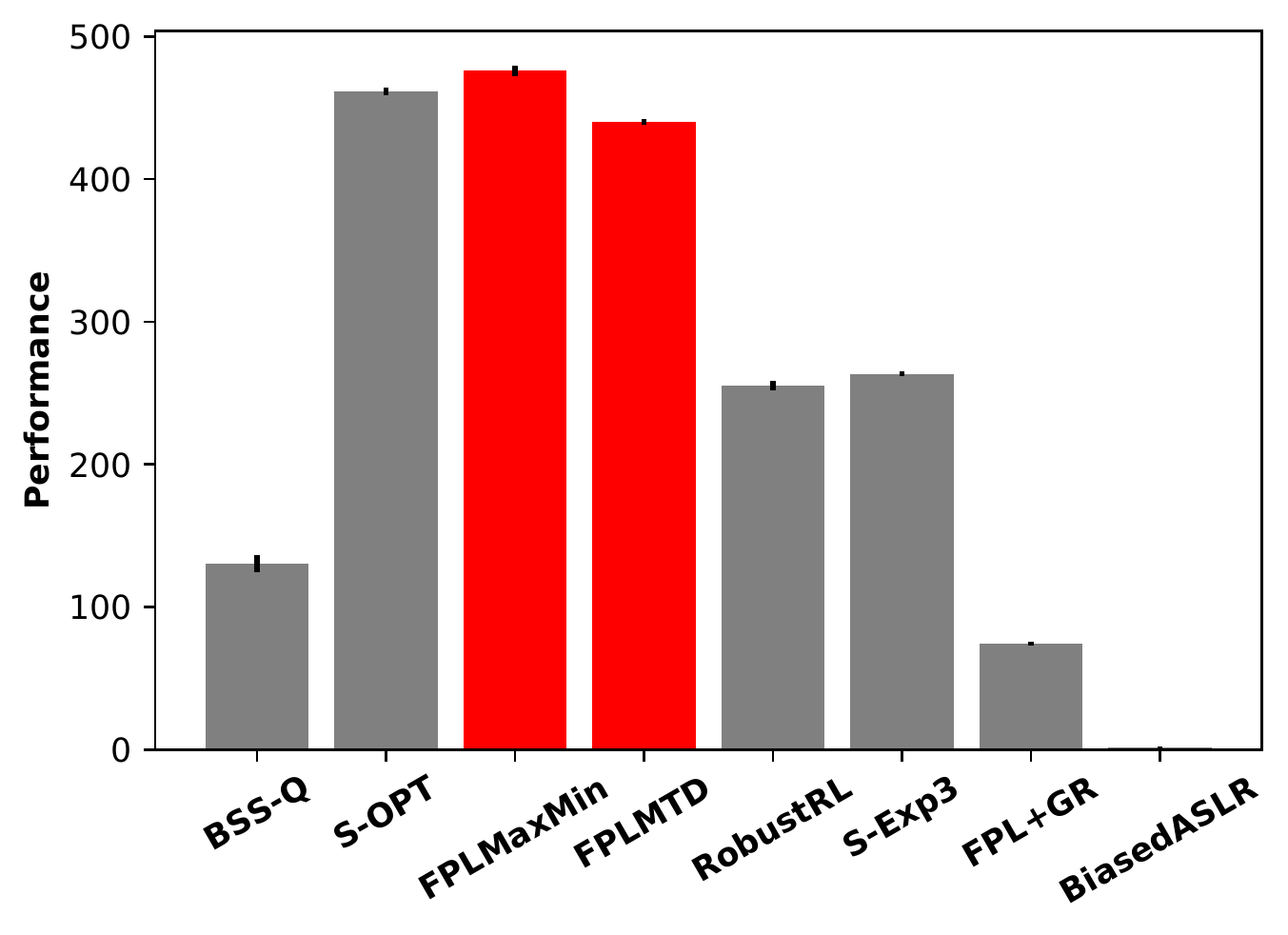}
         \caption{Stackelberg}
         \label{fig:zero-sum-stackelberg}
     \end{subfigure}
     \hfill
     \begin{subfigure}[b]{0.24\textwidth}
         \centering
         \includegraphics[width=\textwidth]{./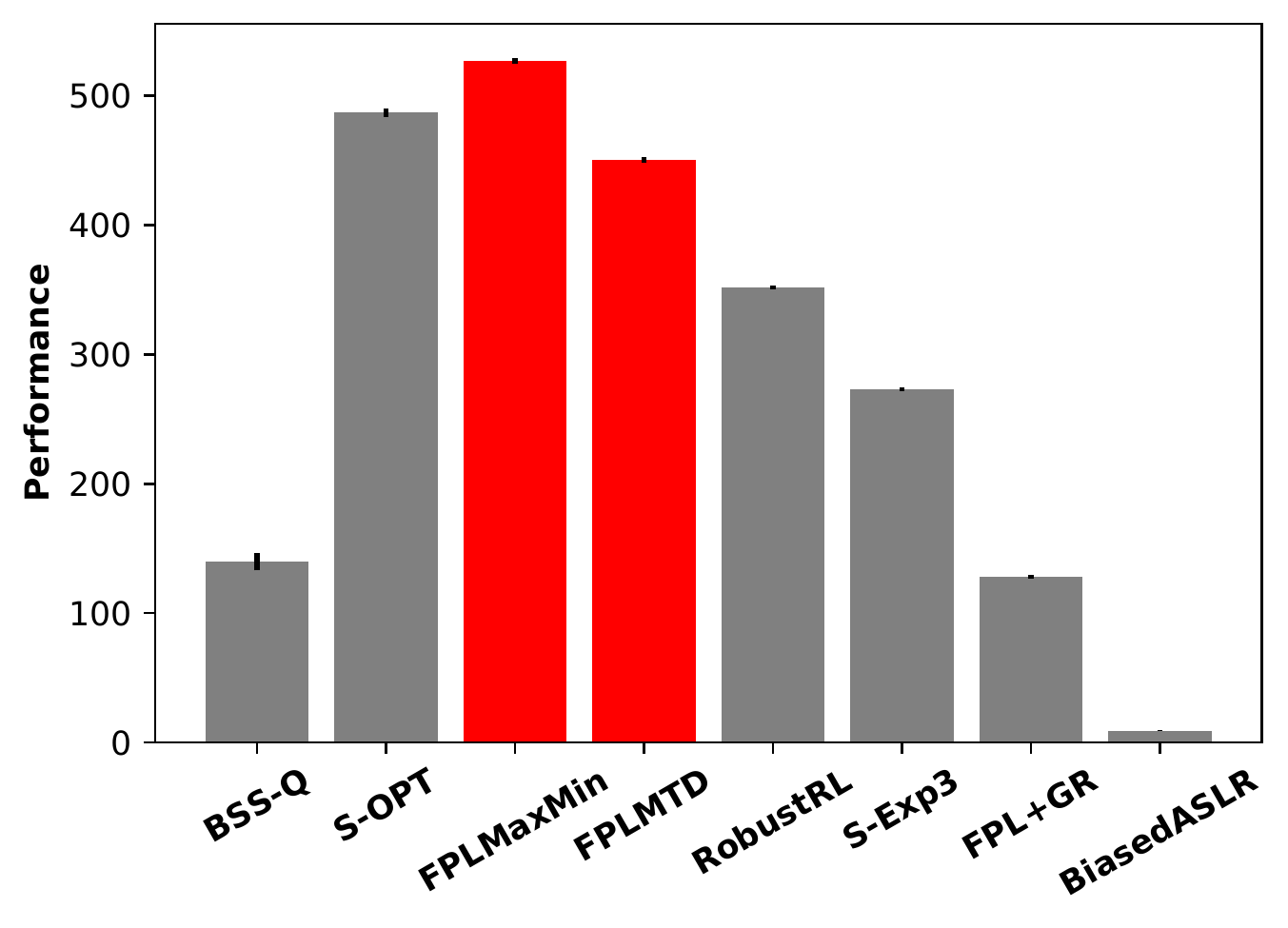}
         \caption{Random}
         \label{fig:zero-sum-random}
     \end{subfigure}
        \caption{Performance of algorithms on zero sum synthetic data. On each of the graphs, from left to right: BSS-Q, S-OPT, FPLMaxMin, FPLMTD, RobustRL, S-Exp3, FPL+GR}
        \label{fig:zero-sum}
\end{figure*}

\section{Identification of Critical Vulnerabilities}\label{sec:identification}

Along with providing a good switching strategy to facilitate Moving Target Defense, the reward estimates used by \emph{FPL-MaxMin} can also be used to help decide which vulnerabilities to fix when defenders do not have the resources to fix all the vulnerabilities of their configurations. 
More formally, we study the following problem in this section: given a budget $b$ and a price vector $\vec p$ where $p_v$ denotes the price of fixing vulnerability $v \in  V$, what is the best set of critical vulnerabilities that the defender should fix in order to improve the future performance of \emph{FPL-MaxMin}? Note that this cannot be done with \emph{FPL-MTD} since it does not maintain a reward estimate for each vulnerability. While the reward estimates of \emph{FPL-MTD} can be used to provide direction when deciding which configurations to focus on, it cannot be used to identify critical vulnerabilities. 

Let $r_{v, \psi}$ denote the reward estimate for each vulnerability-attacker type pair $(v, \psi)$ obtained by the \emph{FPL-MaxMin} algorithm. 
Given a budget $b$ and a price vector $\vec p$, one approach to choosing critical vulnerabilities is to follow the configuration selection approach of \emph{FPL-MaxMin} and choose vulnerabilities so as to maximize the weighted max-min value of the defender given by \eqref{eq:max-min}. More formally, let $S \subseteq V$ be the set of chosen vulnerabilities to be fixed.
Let $r^S$ denote the reward estimate where $S$ is fixed and the defender obtains no negative reward from that vulnerability anymore i.e. 
\begin{align*}
    r^S_{v, \psi} = 
    \begin{cases}
    r_{v, \psi} & v \notin S \\
    0 & v \in S
    \end{cases}
\end{align*}
The goal of the above mentioned approach is to choose a set $S$ such that it maximizes the weighted max-min value of the game subject to budget constraints. This can be represented by the following optimization problem:
\begin{align}
    \max_{S \subseteq  V} & \max_{c \in  C} \bigg (\sum_{\psi \in \Psi}  P_{\psi} \min_{v \in  V_c} r^S_{v, \psi} \bigg ) \notag \\
    \text{s.t.} & \sum_{v \in S} p_v \le b \label{eq:max-max-min}
\end{align}
where $ V_c$ is the set of vulnerabilities of configuration $c$.

It is easy to see that a set $S$ which maximizes the above optimization problem will be incredibly biased towards a  configuration which already has a high worst-case reward; this approach chooses a configuration with the highest worst-case reward and spends the entire budget fixing vulnerabilities of that configuration. This can be seen more clearly in Example \ref{ex:vul-choice}.

\begin{example}\label{ex:vul-choice}
Consider a setting with $1$ attacker type, $2$ configurations $\{c_1, c_2\}$ and $3$ vulnerabilities, $ V = \{v_1, v_2, v_3\}$. Let $c_1$ have vulnerabilities $\{v_1, v_2\}$ and $c_2$ have vulnerabilities $\{v_2, v_3\}$. Let the reward estimate vector be $r = (-1, -0.5, -0.1)$. From this, the worst vulnerability of $c_1$ has reward $-1$ and the worst vulnerability of $c_2$ has reward $-0.5$ Assume that there is enough budget to fix only one vulnerability which can be any of the three.

It is easy to see that \eqref{eq:max-max-min} will choose to fix $v_2$ and increase the reward of the worst vulnerability of $c_2$ to $-0.1$ while keeping the reward of the worst vulnerability of $c_1$ at $-1$.
\end{example}

Effective Moving Target Defense requires configurations which are roughly equal in terms of the number of vulnerabilities they have. Otherwise, when one configuration becomes significantly better than the others, the others will not be deployed as often and will go to waste. We, therefore, need to modify our approach to evenly develop each of the configurations. We can do this by replacing the $\max$ over the set of configurations with a $\min$. This intuitively maximizes the worst vulnerability of the worst configuration; such an approach spreads the budget across configurations since the ``worst'' configuration keeps changing as we fix vulnerabilities in some of them and make them better. 

More formally, we have a new optimization problem, which we call $\ChooseVul$:
\begin{align}
    \max_{S \subseteq  V} &  \min_{c \in  C} \bigg (\sum_{\psi \in \Psi}  P_{\psi} \min_{v \in  V_c} r^S_{v, \psi} \bigg ) \notag\\
    \text{s.t} & \sum_{v \in S} p_v \le b \label{eq:vulnerability-choice}
\end{align}
While the above formulation does model the problem at hand, it is unfortunately NP-Hard to solve even when there is only one configuration.

\begin{theorem}\label{thm:vulnerability-nphard}
Computing the optimal solution of $\ChooseVul$ is NP-Hard even when there is only one configuration.
\end{theorem}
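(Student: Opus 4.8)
The plan is to reduce from the \textbf{0/1 Knapsack} problem. First I would record how the objective degenerates when $|C|=1$: writing $C=\{c\}$ with $V_c=V$, the outer $\max$ and $\min$ over configurations in \eqref{eq:max-max-min} and \eqref{eq:vulnerability-choice} both vanish, and since every $r_{v,\psi}<0$ while fixing a vulnerability raises its reward to $0$, the problem collapses to
\begin{align*}
    \max_{S\subseteq V,\ \sum_{v\in S}p_v\le b}\ \sum_{\psi\in\Psi}P_\psi\, \min_{v\in V\setminus S} r_{v,\psi}.
\end{align*}
I would first observe that with a \emph{single} attacker type this is polynomial‑time solvable (to raise the single $\min$ to a level $\ell$ one must fix exactly the vulnerabilities below $\ell$, so one just sweeps over thresholds), which tells us the hardness must come from the interaction across many attacker types. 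This guides the gadget: I will use one attacker type per Knapsack item so that the weighted sum of per‑type minima becomes an additive value function.

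Given a Knapsack instance with items $i\in[m]$ of positive values $w_i$ and weights $a_i$, budget $B$, and target $K$, I would construct a \ChooseVul\ instance with one configuration, attacker types $\psi_1,\dots,\psi_m$, and vulnerabilities $v_1,\dots,v_m$ together with one extra ``unfixable'' vulnerability $v_0$. Fix some $\epsilon\in(0,1)$ and set the rewards diagonally: $r_{v_i,\psi_i}=-1$, and $r_{v_j,\psi_i}=-\epsilon$ for all $j\neq i$, with $r_{v_0,\psi_i}=-\epsilon$ for every $i$ (all values lie in $[-1,0)$ as required). Let $P_{\psi_i}=w_i/W$ with $W=\sum_i w_i$, set prices $p_{v_i}=a_i$, $p_{v_0}=b+1$ (so $v_0$ can never be bought), and budget $b=B$. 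The role of $v_0$ is exactly to pin the per‑type minimum at $-\epsilon$ whenever $v_i$ is fixed, which removes the degenerate boundary case in which fixing \emph{every} relevant vulnerability of a type would send its minimum to $0$.

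The correctness argument is then a short computation: for type $\psi_i$ the value $\min_{v\notin S}r_{v,\psi_i}$ equals $-1$ if $v_i\notin S$ and $-\epsilon$ if $v_i\in S$ (the presence of $v_0$ guarantees the latter), so the objective is the affine, strictly increasing function $-1+\tfrac{1-\epsilon}{W}\sum_{v_i\in S}w_i$ of the Knapsack value $\sum_{v_i\in S}w_i$, under the identical budget constraint $\sum_{v_i\in S}a_i\le B$. Hence a set $S$ achieves \ChooseVul\ objective at least $\theta:=-1+\tfrac{(1-\epsilon)K}{W}$ if and only if the corresponding item set solves the Knapsack decision instance, and the reduction is clearly polynomial‑time; membership of the decision version in NP is immediate by guessing and checking $S$. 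I would close by noting (if the paper constrains $p_v\in[0,1]$) that scaling all weights and the budget by a common factor leaves Knapsack invariant, so prices may be taken in range.

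The main obstacle I anticipate is not the reduction's algebra but \emph{designing the gadget so the $\min$ is controllable}: because fixing a vulnerability only helps when it is the current argmin of its type, a careless encoding lets the weighted sum of minima behave non‑additively. The diagonal reward structure plus the unfixable sentinel $v_0$ is precisely what linearises the objective into a Knapsack value, so the crux of the write‑up is verifying the per‑type case analysis (the $-1$ versus $-\epsilon$ dichotomy) and confirming it holds for every $S$ within budget.
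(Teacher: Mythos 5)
Your proof is correct and takes essentially the same route as the paper: a reduction from 0--1 Knapsack with a single configuration and one attacker type per item, so that the diagonal reward structure linearizes the weighted sum of per-type minima into the Knapsack value under the identical budget constraint. The differences are only cosmetic --- the paper encodes the item values in the rewards ($r_{v_i,\psi_i} = -\val_i$) with uniform attacker probabilities and zero off-diagonal rewards, whereas you encode them in the probabilities $P_{\psi_i} \propto w_i$ and add the unfixable sentinel $v_0$ with reward $-\epsilon$ so that all rewards stay in $[-1,0)$.
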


\begin{proof}
We prove this via a reduction to the NP-Complete problem (0-1) KNAPSACK:
\begin{quote}
    Given $n$ items $N = [n]$, each with non-negative value $\val_i$ and non-negative weight $w_i$, and two non-negative numbers $W$ and $K$, does there exist a subset $S \subseteq N$ such that the total weight of $S$ is less than $W$ and the total value of $S$ is at least $K$?
\end{quote}
Given an instance of KNAPSACK, construct an instance of $\ChooseVul$ as follows: let there be one configuration with a set of $n$ vulnerabilities $\{v_1, v_2, \dots, v_n\}$ and a set of $n$ attacker types $\{\psi_1, \psi_2, \dots, \psi_n\}$, each with equal probability of attacking. Construct reward estimates as follows:
\begin{align*}
    r_{v_i, \psi_j} = 
    \begin{cases}
        -\val_i & i = j \\
        0 & i\ne j
    \end{cases}
\end{align*}
This can be intuitively understood as a setting where each attacker type can only exploit one vulnerability and no two types can exploit the same vulnerability.
Lastly, let the price vector $\vec p$ be equal to the weight vector $\vec w$ and the budget $b$ be equal to $W$.
With this instance formulation, $\ChooseVul$ given by \eqref{eq:vulnerability-choice} reduces to 
\begin{align}
    \max_{S \subseteq  V} &  \bigg (\sum_{i \in [n]} \frac1n  r^S_{v_i, \psi_i} \bigg ) \notag\\
    \text{s.t} & \sum_{v \in S} w_v \le W \label{eq:knapsack-choice}
\end{align}
This is because we can get rid of the $\min_{c \in  C}$ term by assuming only one configuration. We can also replace the $\min_{v \in  V_c}$ by the only vulnerability that the attacker type $\psi_i$ can exploit. 

Let us now denote the set $S$ by a binary vector $\vec x$ where $x_i = 1$ if and only if $i \in S$. We can re-write the above optimization problem as
\begin{align*}
    \max_{\vec x} &  \bigg (\sum_{i \in [n]} \frac{-\val_i + \val_i x_i}{n} \bigg ) \notag\\
    \text{s.t} & \sum_{i \in [n]} w_i x_i \le W \\
    & x_i \in \{0, 1\}
\end{align*}
By removing all the constant terms in the objective, the above problem is equivalent to the following problem
\begin{align}
    \max_{\vec x} &  \bigg (\sum_{i \in [n]}  \val_i x_i \bigg ) \notag\\
    \text{s.t} & \sum_{i \in [n]} w_i x_i \le W \notag \\
    & x_i \in \{0, 1\} \label{eq:knapsack-milp}
\end{align}
Note that this is equivalent to the optimization version of the knapsack problem. It is easy to see that the objective function value in \eqref{eq:knapsack-milp} is at least K if and only if the solution to the original knapsack instance is YES. Moreover, the solution to \eqref{eq:knapsack-milp} can be computed in polynomial time using the solution from \eqref{eq:knapsack-choice}. This completes the reduction.
\end{proof}

The above reduction requires multiple attacker types to exist. If we restrict the problem space to instances that only have one attacker type, the greedy algorithm can solve $\ChooseVul$ in polynomial time. The steps of the algorithm have been described in Algorithm \ref{algo:greedy-oneattacker}. The algorithm starts with an empty set and at each iteration, adds the vulnerability with the least reward estimate. It keeps going till it cannot add another item without violating the budget constraint.

\begin{algorithm}
    \caption{Greedy Algorithm for Vulnerability Selection with One Attacker Type}
    \label{algo:greedy-oneattacker}
    \begin{algorithmic}
        \State $S \gets \emptyset$ 
        \While{$1$}
            \State $v' =  \argmin_{v \in  V} r^S_{v, \psi}$ \Comment{If there are multiple, choose the one with the least price}
            \If{$\sum_{v \in S} p_v + p_{v'} \le b$}
                \State $S \gets S \cup \{v'\}$
            \Else 
                \State \textbf{break}
            \EndIf
        \EndWhile
        \State \textbf{return} $S$
    \end{algorithmic}
\end{algorithm}

\begin{theorem}\label{thm:greedy-oneattacker}
Algorithm \ref{algo:greedy-oneattacker} returns the optimal solution of $\ChooseVul$ when there is only one attacker type.
\end{theorem}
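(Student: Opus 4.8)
The plan is to prove this in three stages: first collapse the objective to a one–dimensional ``raise the minimum'' problem using the single–attacker assumption, then characterise the optimal value combinatorially, and finally show the greedy output attains it by a budget/feasibility contradiction.

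First I would simplify the objective. With a single attacker type $\psi$ we have $P_\psi = 1$, and since $V = \bigcup_{c \in C} V_c$ every vulnerability lies in at least one configuration, so the inner–minimisation collapses: $\min_{c \in C}\min_{v \in V_c} r^S_{v,\psi} = \min_{v \in V} r^S_{v,\psi}$. Thus $\ChooseVul$ reduces to choosing $S$ with $\sum_{v \in S} p_v \le b$ so as to maximise $\min_{v \in V} r^S_{v,\psi}$. Because $r^S_{v,\psi} = 0$ for $v \in S$ while $r_{v,\psi} < 0$ otherwise, the value of any feasible $S$ is exactly the reward of the worst \emph{unfixed} vulnerability, $\min_{v \in V \setminus S} r_{v,\psi}$ (and $0$ if $S = V$); in particular the objective is always at most $0$.

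Next I would establish the structural lemma driving the argument: to achieve value strictly greater than a threshold $\alpha$, a feasible set $S$ \emph{must} fix every vulnerability with reward below $\alpha$, since any unfixed such vulnerability pins the minimum at a value $\le \alpha$. Letting $\mathrm{OPT}$ denote the optimal value and $L := \{v : r_{v,\psi} < \mathrm{OPT}\}$, this forces $L \subseteq S^\star$ for every optimal $S^\star$, and hence $\sum_{v \in L} p_v \le b$. I would then show greedy attains $\mathrm{OPT}$. Since greedy always adds the current worst unfixed vulnerability, it fixes vulnerabilities in nondecreasing order of reward. If it fixes all of $V$ its value is $0$, trivially optimal; otherwise let $v'$ be the vulnerability at which it halts and $\ell = r_{v',\psi}$, so its value is $\ell$ and at that point $S$ already contains every vulnerability with reward $< \ell$ while $\sum_{v \in S} p_v + p_{v'} > b$. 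Suppose for contradiction $\mathrm{OPT} > \ell$. Then every vulnerability with reward $\le \ell$ belongs to $L$, so the lemma gives $\sum_{v:\,r_{v,\psi}\le \ell} p_v \le b$. But greedy, having already fixed all strictly–worse vulnerabilities, would then be able to afford the entire level $\ell$ including $v'$, contradicting that it halts at $v'$. Hence $\mathrm{OPT} \le \ell$, and since $\mathrm{OPT}$ is by definition optimal, equality holds.

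The one step needing care is the treatment of ties in the reward values: I must argue that greedy's halting at $v'$ genuinely certifies that the whole set $\{v : r_{v,\psi} \le \ell\}$ is unaffordable. This holds because greedy halts only when the next worst vulnerability is unaffordable and it has already fixed every strictly–worse one, so the running cost together with all remaining level-$\ell$ vulnerabilities sums to exactly $\sum_{v:\,r_{v,\psi}\le\ell} p_v > b$; the cheapest–first tie-break stated in Algorithm \ref{algo:greedy-oneattacker} is convenient but the bound in fact holds for any ordering within a reward level, since greedy never halts while the cumulative cost of the items it is processing stays within budget. Everything else is a routine comparison of the greedy value against $\mathrm{OPT}$.
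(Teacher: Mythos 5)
Your proof is correct, and it starts from the same reduction as the paper: with a single attacker type, $P_\psi = 1$ and $\min_{c \in C}\min_{v \in V_c} = \min_{v \in V}$, so the objective of $\ChooseVul$ collapses to the reward of the worst unfixed vulnerability. Where you diverge is in how optimality of the greedy set is established. The paper argues by contradiction through a set comparison between the greedy output $S_g$ and an optimal set $S_{OPT}$, split into two cases: if some $v \in S_g \setminus S_{OPT}$ exists, then $O_g \ge r_v \ge O_{OPT}$ (greedy fixed everything worse than $v$, while $v$ stays unfixed in $S_{OPT}$); otherwise $S_g \subseteq S_{OPT}$, and the vulnerability at which greedy halted cannot fit in $S_{OPT}$ either without breaking the budget, forcing $O_{OPT} \le r_v = O_g$. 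You never compare the two sets; instead you prove a level-set lemma (any solution of value exceeding $\alpha$ must fix, and hence be able to afford, every vulnerability of reward at most $\alpha$) and derive a single contradiction from greedy's budget-infeasibility certificate at its halting step, since the halting set $S_g \cup \{v'\}$ sits inside the level set $\{v : r_{v,\psi} \le \ell\}$. Both arguments ultimately rest on the same two facts --- greedy fixes worst-first, and it halts only for budget reasons --- so neither is more general, but your value-based formulation merges the paper's two cases into one and is more explicit about ties: the paper's assertions such as $O_g \ge r_v$ and $O_{OPT} = r_v$ implicitly rely on the worst-first ordering interacting correctly with the cheapest-first tie-break, whereas your argument shows the conclusion holds under any tie-breaking rule within a reward level.
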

\begin{proof}
When there is only one attacker type (say $\psi$), $\ChooseVul$ reduces to the following optimization problem:
\begin{align}
    \max_{S \subseteq  V} &  \bigg ( \min_{v \in  V} r^S_{v, \psi} \bigg ) \notag\\
    \text{s.t} & \sum_{v \in S} p_v \le b \label{eq:oneattacker-objective}
\end{align}
This is because, with one attacker type, we can conflate the two $\min$ functions since there will not be any probability term between them. More specifically, we have $\min_{c \in  C} \min_{v \in  V_c} = \min_{v \in  V}$.

It is easy to see that for such a problem, the optimal solution picks the worst vulnerabilities while respecting the budget constraint. Assume for contradiction that this is not the case. Let $S_g$ be the suboptimal greedy solution and $S_{OPT}$ be the optimal solution. Let the final objective value of the greedy solution be $O_{g}$ and that of the optimal solution be $O_{OPT}$. Assume $O_g < O_{OPT}$.

Assume there exists some vulnerability $v \in S_g \setminus S_{OPT}$ with reward estimate $r_v$. Note that the greedy algorithm by definition picks all the vulnerabilities with a worse reward estimate and therefore, $O_g \ge r_v$. The optimal solution on the other hand, has an objective function upper bounded at $r_v$. This gives us $O_g \ge O_{OPT}$, a contradiction. 

Therefore, we must have $S_g \subseteq S_{OPT}$. Note that $S_g$ terminated since it could not add the least price minimum reward estimate vulnerability to the set due to the budget constraints. Let us call this vulnerability $v$ and denote its reward estimate by $r_v$. We have, by definition $O_g = r_v$. We also have $O_{OPT} = r_v$ since $OPT$ could not add $v$ but by being adding superset of $S_g$, it adds all the vulnerabilities worse than $v$. This gives us $O_g = O_{OPT}$, another contradiction.

Therefore, the greedy algorithm must return the optimal solution. 
\end{proof}

\begin{algorithm}
    \caption{Greedy Algorithm for Vulnerability Selection with Multiple Attacker Types}
    \label{algo:greedy}
    \begin{algorithmic}
        \State $S \gets \emptyset$ 
        \While{$1$}
            \State $S' =  \argmax_{S: |S|= 1}\min_{c \in  C} \big (\sum_{\psi \in \Psi} P_{\psi}\min_{v \in  V_c} r^{S\cup S'}_{v, \psi}$ \big ) \Comment{If there are multiple, choose the one with the least price}
            \If{$\sum_{v \in S( \cup S')} p_v \le b$}
                \State $S \gets S \cup S'$
            \Else 
                \State \textbf{break}
            \EndIf
        \EndWhile
        \State \textbf{return} $S$
    \end{algorithmic}
\end{algorithm}

For more complex cases which have multiple attacker types, we propose the use of a similar greedy algorithm (described in Algorithm \ref{algo:greedy}). At each iteration, the algorithm looks at all possible singleton sets of vulnerabilities and chooses the one which maximizes the objective function. It stops when the budget constraint is violated. This algorithm reduces to Algorithm \ref{algo:greedy-oneattacker} when there is only one attacker type.
While we do not have approximation guarantees for this algorithm, simple simulations show us that the greedy approach gives us a significant improvement in the total utility of \emph{FPL-MaxMin} after fixing only a small number of vulnerabilities. 

More specifically, we tested our approach on the Small NVD-Based Dataset (Section \ref{subsec:expts-nvd-small}) against the Best Response attacker. We assumed all vulnerabilities have a unit price and varied the total budget by trying all the budgets in the set $\{2, 5, 10, 15\}$. Note that with a budget of $15$, any algorithm can only choose roughly $5\%$ of the vulnerabilities.
We compared our approach against a random approach where vulnerabilities are chosen to be fixed randomly till the budget is exceeded.

In order to compare the two approaches, we first ran \emph{FPL-MaxMin} for $1000$ timesteps and then created three branches: one where the greedy algorithm (Algorithm \ref{algo:greedy}) was used to fix vulnerabilities, one where the random approach was used to fix vulnerabilities and one where no vulnerabilities were fixed. For all the three branches, we continued to use the same \emph{FPL-MaxMin} instance for $1000$ more timesteps. We performed this experiment $50$ times and recorded the total utilities.

\begin{figure*}[!tp]
     \centering
     \begin{subfigure}[b]{0.24\textwidth}
         \centering
         \includegraphics[width=\textwidth]{./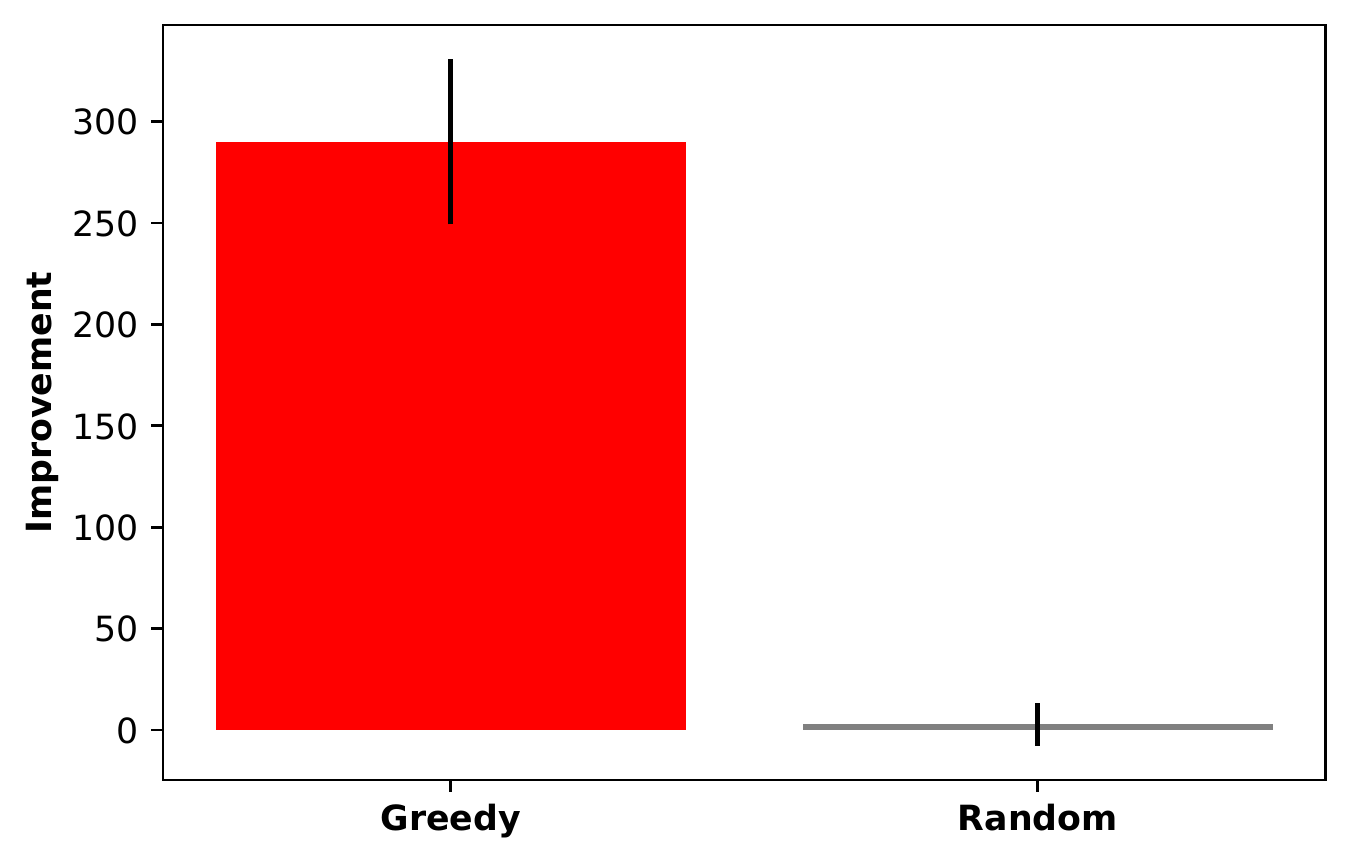}
         \caption{Budget $= 2$}
         \label{fig:critical-vul-b2}
     \end{subfigure}
     \hfill
     \begin{subfigure}[b]{0.24\textwidth}
         \centering
         \includegraphics[width=\textwidth]{./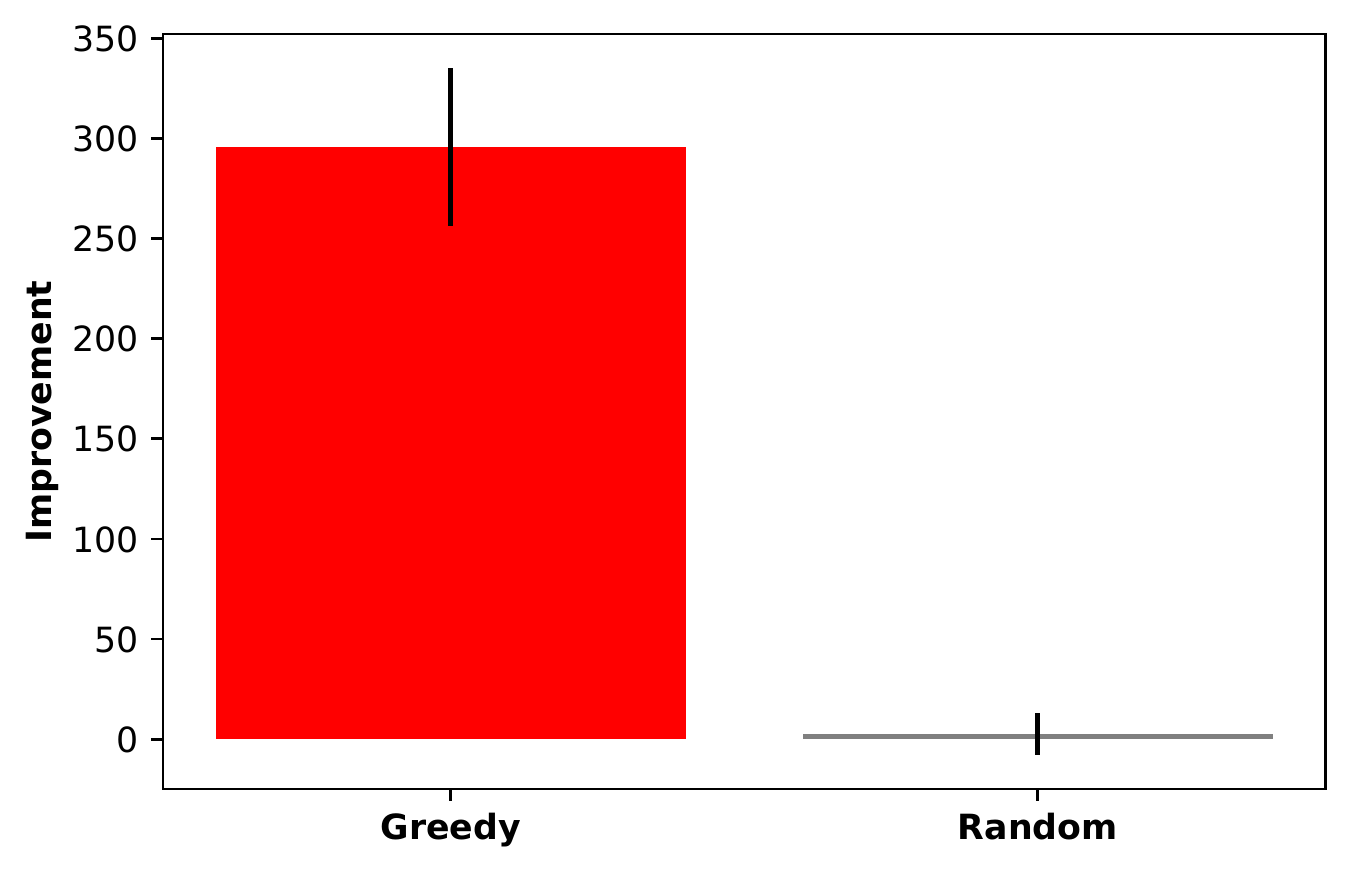}
         \caption{Budget $= 5$}
         \label{fig:critical-vul-b5}
     \end{subfigure}
     \hfill
     \begin{subfigure}[b]{0.24\textwidth}
         \centering
         \includegraphics[width=\textwidth]{./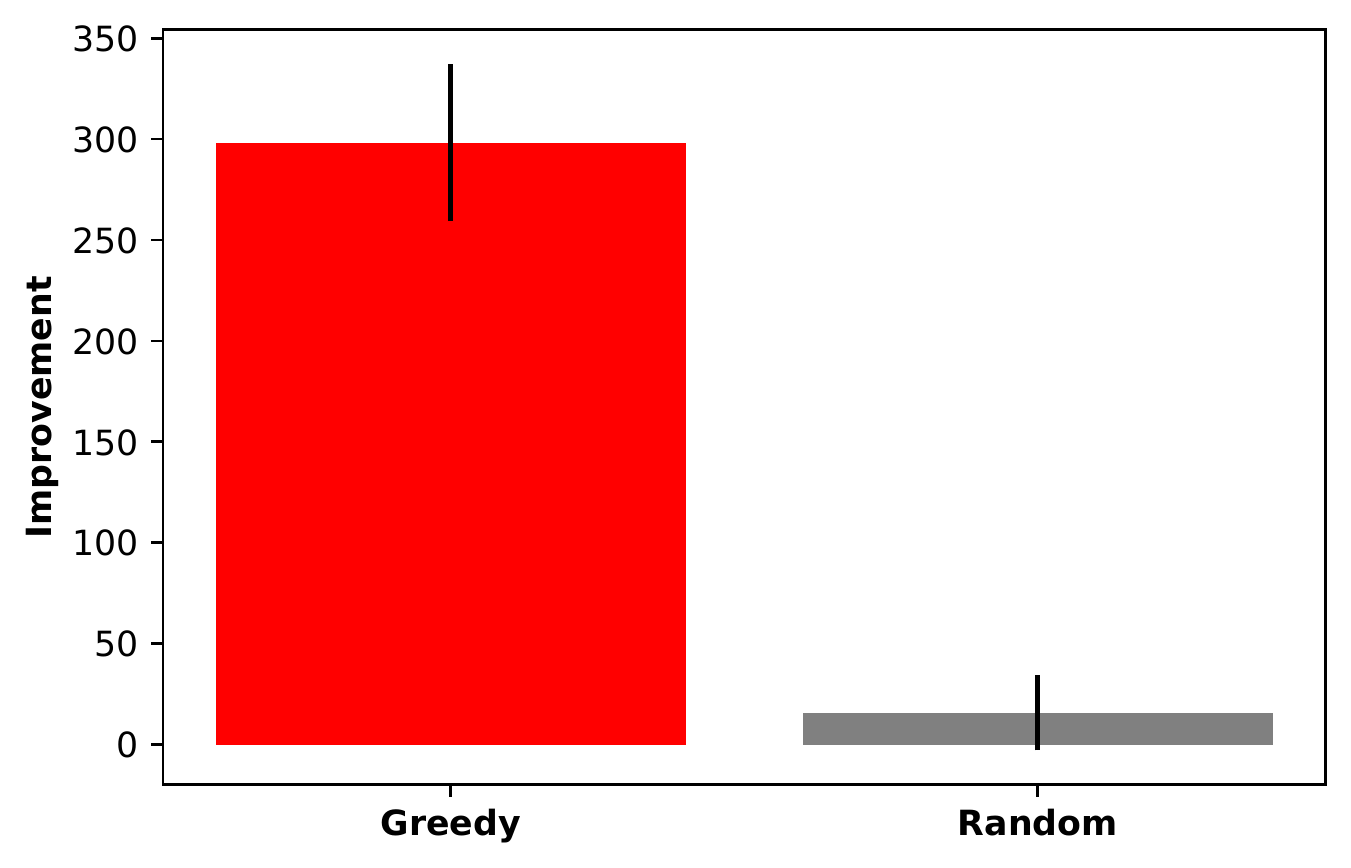}
         \caption{Budget $= 10$}
         \label{fig:critical-vul-b10}
     \end{subfigure}
     \hfill
     \begin{subfigure}[b]{0.24\textwidth}
         \centering
         \includegraphics[width=\textwidth]{./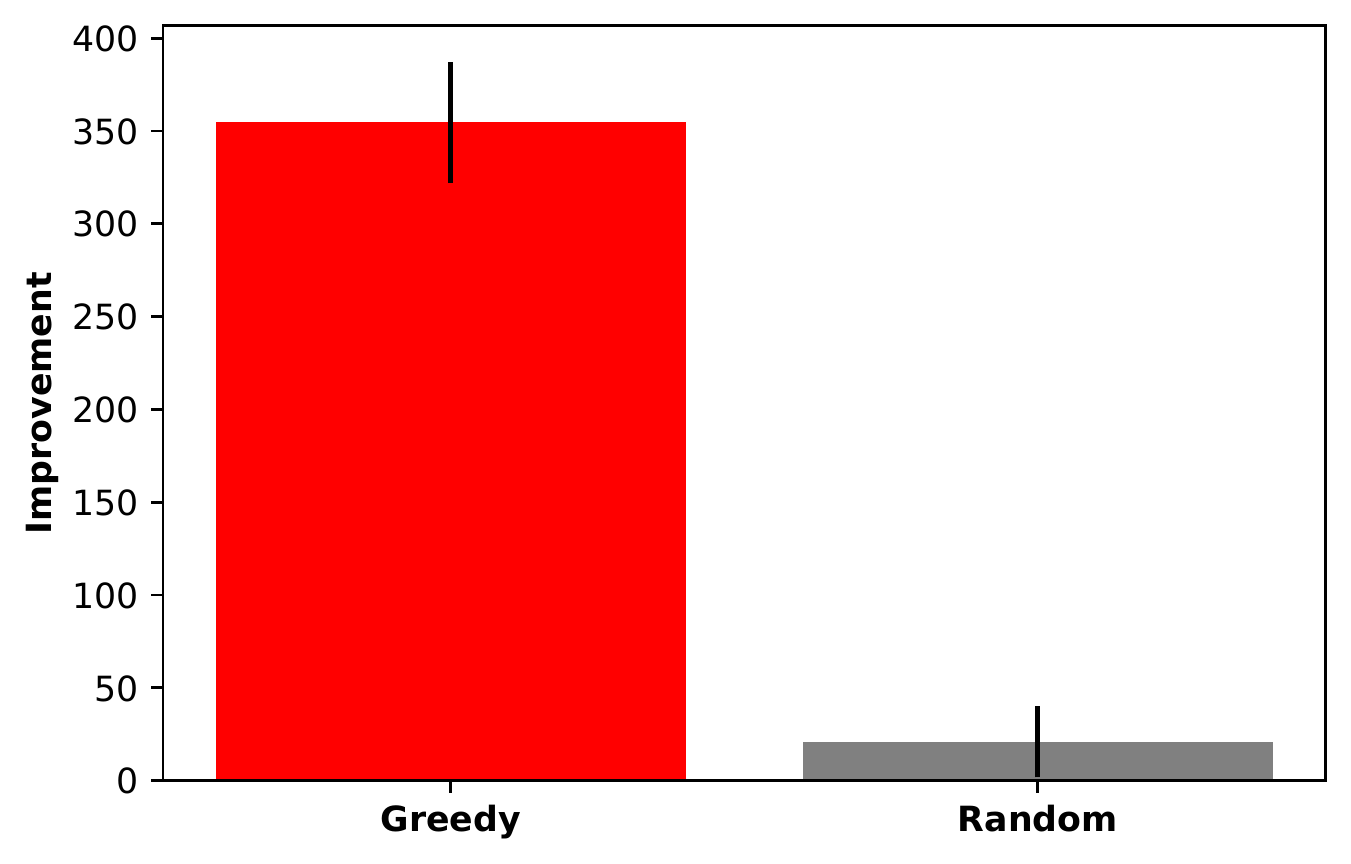}
         \caption{Budget $= 15$}
         \label{fig:critical-vul-b15}
     \end{subfigure}
        \caption{Improvement of vulnerability selection algorithms on the small NVD-based dataset. The greedy approach is on the left and the random approach is on the right.}
        \label{fig:critical-vul}
\end{figure*}

We define {\em improvement} as the difference between the total utility of the algorithm after fixing vulnerabilities and the total utility of the algorithm when no vulnerabilities were fixed. We plotted the average improvement along with standard error bounds for each of our vulnerability fixing approaches in Figure \ref{fig:critical-vul}. 

Our results show that the greedy approach performs significantly
better than the random approach, showing an improvement of more than $100$ times that of the random approach when the budget is equal to $5$. We also find that, when the budget is $5$, the total utility after fixing the vulnerabilities reduces to an average of $57.1\%$ of the total utility without fixing any vulnerabilities; this is good since the total utility is always negative. This shows that we can obtain significant improvements by intelligently choosing the vulnerabilities to fix.

While $\ChooseVul$ is the problem formulation we use, it is just one of the many objectives that can be maximized when choosing vulnerabilities. It arguably may not be optimal in all the settings it is applied to but for this problem, the definition of optimal is itself contentious. We leave a more detailed study of this problem and its various approaches for future work.
\section{Technical Details of Experiments}\label{sec:technical-details}
\subsection{Dataset Generation}\label{subsec:dataset-generation}
\subsubsection{Generation of Large NVD-Based Data}\label{subsub:nvd-large-data}
The National Vulnerability Database (NVD) \footnote{nvd.nist.gov} is a public directory which contains information about Common Vulnerabilities and Exploits (CVEs) that might affect system components. This directory is updated with a new file every year.
For each vulnerability, they use a scoring system called the Common Vulnerability Scoring System (CVSSv3) which score different aspects of the vulnerability. Out of the many scores they use, two scores which are important are the Impact Score ($\IS$) which scores the impact of a successful exploit and the Base Score ($\BS$) which combines the impact score with the ease of exploitability of the vulnerability.
\cite{Sailik2016Webappmtd} use these scores to generate rewards in a Bayesian game as follows: for each vulnerability $v \in  V$, for each attacker type $\psi \in \Psi$ and for each configuration $c \in  C$ they generate a reward function
\begin{align}
    r_{\theta}(\psi, v, c) = 
    \begin{cases}
        -IS_v & v \in  V_{c} \text{ and the attacker type $\psi$ can exploit $v$} \\
        0 & \text{ Otherwise}
    \end{cases}
    \notag \\
    r_{\psi}(v, c) = 
    \begin{cases}
        BS_v & v \in  V_{c} \text{ and the attacker type $\psi$ can exploit $v$} \\
        0 & \text{ Otherwise}
    \end{cases} \label{eq:nvd-large-reward}
\end{align}
where $IS_v$ and $BS_v$ are the impact score and base score of $v$ respectively.

We use the same approach to convert vulnerability scores to rewards; the only difference being that we scale the rewards down rewards to ensure they are in the range $[0, 1]$.
We mine the Common Vulnerabilities and Exploits (CVE) data for the years $2002$ to $2021$ to collect a database of $87180$ vulnerabilities along with their impact scores and base scores. We call this set of vulnerabilities the {\em global set of vulnerabilities}. 

For each dataset, we choose the number of configurations uniformly at random between $10$ and $20$, the number of attackers uniformly at random between $3$ and $6$ and the number of vulnerabilities uniformly at random between $500$ and $800$. We sample each vulnerability in the dataset from the global set of vulnerabilities uniformly at random. 

For each attacker type, we sample a skill level from a truncated normal distribution with range $[0, 1]$. We then obtain the number of vulnerabilities that the attacker can exploit using a truncated normal distribution with mean proportional to their skill level and we choose these vulnerabilities uniformly at random. For each configuration, we generate a vulnerability set such that each vulnerability is not a vulnerability of the configuration with probability $0.01$. 

Using the base score and the impact score of each vulnerability along with the vulnerability set for each configuration and the set of vulnerabilities each attacker can exploit, we generate the reward function for the defender and the attacker according to \eqref{eq:nvd-large-reward}.


\subsubsection{Generation of Synthetic General and Zero Sum Data}\label{subsub:synthetic-data}
For each dataset, we choose the number of configurations uniformly at random between $10$ and $20$, the number of attackers uniformly at random between $3$ and $6$ and the number of vulnerabilities uniformly at random between $500$ and $800$.  

For each attacker type, we sample a skill level from a truncated normal distribution with range $[0, 1]$. We then obtain the number of vulnerabilities that the attacker can exploit using a truncated normal distribution with mean proportional to their skill level and we choose these vulnerabilities uniformly at random. For each configuration, we generate a vulnerability set such that each vulnerability is not a vulnerability of the configuration with probability $0.05$. 

For each vulnerability, we sample the defender reward from $U[-1,0]$ and the attacker reward from $U[0,1]$. In summary, we create the reward function as follows:
\begin{align*}
    r_{\theta}(\psi, v, c) = 
    \begin{cases}
        U[-1, 0] & v \in  V_{c} \text{ and the attacker type $\psi$ can exploit $v$} \\
        0 & \text{ Otherwise}
    \end{cases}
    \\
    r_{\psi}(v, c) = 
    \begin{cases}
        U[0, 1] & v \in  V_{c} \text{ and the attacker type $\psi$ can exploit $v$} \\
        0 & \text{ Otherwise}
    \end{cases}
\end{align*}
For zero sum datasets, we only sample the defender reward and take the attacker reward as the additive inverse of the defender reward.

\subsection{Computational Resources}\label{subsec:resources}
All the Mixed Integer Quadratic Programs were solved using Gurobi. All the experiments were run on an Intel Xeon E5-2680 v4 processor with 128GB RAM.


\subsection{Reproducibility}\label{subsec:reproducibility}
In order to ensure the reproducibility of the code, a seed value of $2022$ has been used to initialize a pseudo-random number generator in all our programs. The same generator is passed to all the functions that use randomization. All the datasets used as well as the values of the hyperparameters have been included in the code; these values have also been presented in Table \ref{table:parameters}.

\begin{table}
\centering
\begin{tabular}{|c | c|} 
 \hline
 Parameter & Value \\
 \hline
    Seed & 2022\\
    Number of attacker types & [3, 6]\\
    Number of vulnerabilities & [500, 800]\\
    Number of configurations & [10, 20]\\
    T & 1000\\
    $\gamma_{MaxMin}$ & 0.006\\
    $\eta_{MaxMin}$ & 0.03\\
    $\gamma_{MTD}$ & 0.007\\
    $\eta_{MTD}$ & 0.1\\
    $\alpha_{BSSQ}$ & 0.2\\
    Discount factor ($\gamma_{BSSQ}$) & 0.8\\
    $\epsilon_{RobustRL}$ & 0.1 \\ [1ex] 
 \hline
\end{tabular}
\caption{Parameter Values}
\label{table:parameters}
\end{table}

\end{document}